\documentclass{article} 
\usepackage{iclr2026_conference,times}

\usepackage{amsmath,amsfonts,bm}
\usepackage{amsthm}

\def\eqref#1{equation~\ref{#1}}

\def\1{\bm{1}}

\DeclareMathAlphabet{\mathsfit}{\encodingdefault}{\sfdefault}{m}{sl}
\SetMathAlphabet{\mathsfit}{bold}{\encodingdefault}{\sfdefault}{bx}{n}

\usepackage{thmtools} 
\usepackage[ruled,vlined]{algorithm2e}
\usepackage{hyperref} 
\usepackage{cleveref}
\usepackage{thm-restate}
\usepackage[utf8]{inputenc} 
\usepackage[T1]{fontenc}    
\usepackage{url}            
\usepackage{booktabs} 
\usepackage{multirow}
\usepackage{amsfonts}       
\usepackage{nicefrac}       
\usepackage{microtype}      
\usepackage[dvipsnames]{xcolor}         
\usepackage[table]{xcolor} 
\usepackage{comment}
\usepackage{microtype}
\usepackage{graphicx}
\usepackage{subcaption}
\usepackage{booktabs}
\usepackage{makecell} 
\usepackage{array}
\usepackage{subcaption}

\newtheorem{proposition}{Proposition}[section]
\newtheorem{definition}{Definition}
\newtheorem{claim}{Claim}[section]
\newtheorem{theorem}{Theorem}

\newtheorem{lemma}[theorem]{Lemma}

\title{Protection against Source Inference Attacks in Federated Learning}

\author{Andreas Athanasiou\thanks{Primary authors with equal contribution.}  \\
TU Delft \& Inria\\
Delft, Netherlands \& Palaiseau, France\\
\texttt{a.athanasiou@tudelft.nl} \\
\And
Kangsoo Jung\footnotemark[1] \\
Inria \\
Palaiseau, France\\
\texttt{kangsoo.jung84@gmail.com} \\
\AND
Catuscia Palamidessi \\
Inria \&  IPP \\
Palaiseau, France \\
\texttt{catuscia@lix.polytechnique.fr}
}

\iclrfinalcopy 
\begin{document}

\maketitle

\begin{abstract}
Federated learning (FL) was initially proposed as a privacy-preserving machine learning paradigm.
However, FL has been shown to be susceptible to a series of privacy attacks. 
Recently, there has been concern about the \emph {source inference attack} (SIA), where an honest-but-curious central server attempts to identify exactly which client owns a given data point which was used in the training phase. Alarmingly, standard gradient obfuscation techniques with differential privacy have been shown to be ineffective against SIAs, at least without severely diminishing the accuracy.

In this work, we propose a defense against SIAs within the widely studied shuffle model of FL, where an honest shuffler acts as an intermediary between the clients and the server. First, we demonstrate that standard naive shuffling alone is insufficient to prevent SIAs. 
To effectively defend against SIAs, shuffling needs to be applied at a more granular level; we propose a novel combination of parameter-level shuffling with the \emph{residue number system} (RNS).
Our approach provides robust protection against SIAs without affecting the accuracy of the joint model and can be seamlessly integrated into other privacy protection mechanisms. 

We conduct experiments on a series of models and datasets, confirming that standard shuffling approaches fail to prevent SIAs and that, in contrast, our proposed method reduces the attack’s accuracy to the level of random guessing.
\end{abstract}

\section{Introduction} \label{sec:introduction}

Federated learning (FL) \cite{fl_mother_paper} is a machine learning framework that trains a global model across multiple clients without centralizing data. Each client updates the model using its local dataset and sends the updates to a central server, which aggregates them to obtain the global model $W$.
The most well-known aggregation function is \emph{FedAvg}, where 
$
W \leftarrow \sum_{i=1}^n \frac{N_i}{N}\, w_i
$ for $n$ clients, where each client $i$ has $N_i$ data points and $N = \sum_i N_i$.
The initial concept of FL was that it was private by design, as the data of each client was not shared with anyone.
Unfortunately, an honest-but-curious central server can launch a series of privacy attacks, as it observes the clients' reported model updates and can use them to infer information. For instance, the server can launch a  \emph{membership inference attack} (MIA) \cite{mia_attack,carlini2022membership,choquette2021label} which aims to determine if a particular data point exists in the training dataset of \emph{any} client. 

The \emph{source inference attack} (SIA) \cite{sia_paper, sia_extended} has recently been  proposed as a natural extension of the MIA. 
In an SIA, the adversary (an honest-but-curious central server)  tries to determine  \emph{exactly which} client owns a given data point.
To achieve this, the attacker compares the accuracy of the target data point across the received models before they are aggregated into the global model. This can pose a severe violation of privacy; consider for example a medical model  jointly built by different hospitals to treat some disease and suppose that a hospital A is specialized in cancer cases. If the central server learns from a successful SIA that a patient's data was used by Hospital A, then they can confidently assume that the patient suffers from cancer.

\paragraph{Defending against SIAs}
While MIAs exploit model overfitting, SIAs work by exploiting differences in model predictions across clients due to their heterogeneous data distributions.
Model obfuscation techniques such as 
differential privacy (DP) \cite{dppaper2,dp_sgd} have been shown to be inadequate to protect against SIAs, at least without severely deteriorating  the accuracy of the joint model \cite{sia_paper, sia_extended}. That is because the level of noise is required to be large to successfully make the different local models indistinguishable from each other. 

Another approach is to consider regularization-based defenses, which can be beneficial against MIAs since they reduce overfitting \cite{kaya2020effectivenessregularizationmembershipinference}. However, such defenses are insufficient for SIAs as they do not focus on reducing the distributional differences between clients \cite{sia_extended}. To verify, we include an empirical evaluation on all common regularization-based defenses (\Cref{app_recon_more_exps}). Our results show that regularization-based defenses have virtually no impact against SIAs. 

Defenses against Data Reconstruction Attacks (DRAs) (e.g. Instahide \cite{instahide}, FedAdOb \cite{fedadob}) are also ineffective as they focus on preventing gradient inversion. In contrast, SIAs assume that the attacker already knows that a given data point was used in training. We empirically verify that such defenses have no effect in \Cref{app_recon_more_exps}. 
Finally, \cite{sia_extended} investigated whether knowledge distillation techniques (e.g. FedMD) can be a means of defense. 
FedMD was only able to slightly decrease the SIA success rate; the attack remains well above random guessing. 


In this work, we focus on the shuffle model, which assumes the presence of a trusted shuffler. This approach, which provides additional privacy protection by permuting clients' data, has been widely studied in FL. Interestingly, although standard naive shuffling removes the connection between the data owner and their value, this is not enough to protect against SIAs. 
To show this, we propose a series of attacks aimed at defeating the effect of shuffling by remapping the values back to their original owners (\Cref{sec:recon}).

Therefore, a new defense strategy against SIAs is necessary. We aim to satisfy these specifications:

\begin{itemize}

\item \textbf{S.1: Protection}: The accuracy of SIAs is reduced to the level of random guessing, even in the challenging scenario where clients hold datasets with a high level of dissimilarity. 

 \item \textbf{S.2: Integrability}:  The solution can be seamlessly integrated as a “module” into existing shuffle-model FL architectures, and is compatible with other privacy mechanisms, such as DP, which protect from other kinds of privacy attacks.

\item \textbf{S.3: Communication efficiency}:  Considering that SIAs are better suited for the cross-silo setting of FL (cf. \Cref{sec:model}) an increase in the communication cost can be tolerated, as long as it remains reasonable. 

\item \textbf{S.4: Maintains model accuracy}:   DP does not protect against SIAs unless we pay a high price in accuracy. Hence, typical approaches based on DP-SGD would not be optimal.

\item \textbf{S.5: Minimal trust assumptions}:  Each client does not have to trust any additional entity. 
\end{itemize}

\paragraph{Contributions}
Our contributions are summarized as follows:

\begin{itemize}
\item We propose novel \emph{reconstruction attacks} that defeat standard shuffling in FL by remapping the shuffled values back to their original owners. We examine three shuffling methods; model-level, layer-level and parameter-level, and provide a corresponding reconstruction algorithm for each. These attacks enable SIAs within the standard shuffle model of FL, making shuffling alone insufficient (\Cref{sec:recon}).

\item We propose the first robust defense against SIAs in the shuffle model of FL. We introduce a novel dimension-based granular shuffling method, using the \emph{residue number system} (RNS). The defense reduces attack accuracy to random guessing without affecting joint model accuracy and can be seamlessly integrated into existing shuffle mechanisms (\Cref{sec:def}).

    

   

    \item We conduct experiments on the MNIST and CIFAR-10 with CNN and CIFAR-100 with ResNet-18 which validate our analyses (\Cref{sec:eval}).

\end{itemize}


\section{Related Work} \label{sec:related_work}

FL remains vulnerable to various attacks despite the fact that raw data are not disclosed \cite{dlg,geiping2020inverting,sia_paper, sia_extended}. In \cite{sia_paper, sia_extended} the authors introduced the concept of SIAs and empirically showed, by performing experiments across a variety of datasets, that model overfitting and data distribution (increased heterogeneity) are the most important factors for making a model susceptible to SIAs. 

Shuffling has been widely explored in the literature as a means to protect privacy.
A shuffler can be implemented distributively using \emph{trusted hardware} \cite{google_ESA,shuffler_trusted_hardware}, \emph{multi-party computation} \cite{mpc_shuffler1,mpc_shuffler2,mpc_shuffler3}, \emph{MixNets} \cite{tor,mixnets2,mixnets3,mixnets4,mixnets_survey}, which can also be verifiable \cite{verifiable_shuffler1,verifiable_shuffler2} and \emph{DC networks}
\cite{dc_networks1,dc_networks2}. In other words, the shuffler does not necessarily need to be a single separate server; its functionality could be performed distributively  for instance even by the clients themselves (cf. \Cref{sec:model}). This appealing trust assumption of the shuffle model has led to its adaptation from tech giants, such as Google \cite{google_ESA} 
and Brave \cite{shuffle_brave}. 

In DP, the shuffle model was first introduced by \cite{google_ESA} and later formalized in \cite{dp_shuffle}. 
This model has been applied in FL in a variety of ways  to amplify the privacy by adding an additional layer of anonymity \cite{sun2020ldp, camel,shuffle_dp_FL, adastopk, flame, shuffle_time_series,mixnn}.
Each user perturbs their model updates, then sends them to the shuffler which randomly permutates them before releasing them to the central server.


As far as we are aware of, no defense for SIAs has been studied in the literature, which is the motivation for this work.
We have previously presented a preliminary shuffle-based approach to defending against SIAs in a poster \cite{anonymous}, which partially reduced the accuracy of SIAs but was practically unfeasible due to its high communication cost.

\section{Preliminaries} \label{sec:prelim}

\paragraph{Source inference attacks (SIAs) \cite{sia_paper, sia_extended}} The adversary (central server) knows that a training record $z=(x,y)$ (where $x$ is an input vector and $y$ is a class label) is present in the training dataset. 
The source status of each record can be represented by an $n$-th dimensional multinomial vector $s$ (where $n$ is the number of clients). Only one element of $s_{i,j}$ is equal to 1 (indicating that client $i$ owns the $j$-th record) and the others are set to 0.
For the $j$-th record $z_j$, the source inference attack can be defined as follows:

\begin{definition}[Source inference attack] \cite{sia_paper, sia_extended} \label{def:sias}
Given a local optimized model $w_i$ and a training record ${z_j}$, source inference  aims to infer the posterior probability of ${z_j}$ belonging to the client $i$:
\[
\mathcal{S}(w_i, {z_j}) := P(s_{i,j} = 1 \mid w_i, {z_j})
\]
\end{definition}


\paragraph{Encoding Schemes}
In this work, we combine shuffling with the following encoding schemes:

\begin{definition}[Unary encoding] \label{def:unary_enc}
 If \( x,k \in \mathbb{N}, x\leq k \), then $x$ is encoded as:
$
  \mathcal{U}(x,k) = \{1\}^x \cup  \{0\}^{k-x} 
$
\end{definition}

Moreover, we employ the residue number system (RNS) \cite{rns_encoding}:
\begin{definition}[RNS encoding] \label{def:rns}
If $m_1,m_2, \ldots, m_u$ are pairwise coprime, then \( x \in \mathbb{N} \) is encoded as
$\{x_1,x_2, \ldots, x_u\} $ where $x_i = x \bmod m_i$.
\end{definition}

Given the residues $x_1,x_2, \ldots x_u$, one can reconstruct $x$ using the \emph{Chinese remainder theorem} (we show an example in \Cref{chinese_example}). Also, \Cref{def:rns} can be extended for $x \in \mathbb{Z}$. If $x< 0$, the encoding is first performed for $|x|$, getting the resulting residues $x_1,x_2, \ldots x_u$. Then, $x$ is encoded as $\{m_1- x_1,m_2 - x_2, \ldots, m_u-x_u\} $. Furthermore, the addition of two RNS encoded numbers can be performed directly in the RNS domain by summing their residues.  
Finally, the range of the RNS encoding is defined by the choice of the moduli:
\begin{proposition} \label{range_rns}
     $x \in \big[-\lfloor \frac{M}{2}\rfloor, \lfloor \frac{M-1}{2}\rfloor \big ] \cap \mathbb{Z}$ can be losslessly encoded in the RNS, where $M = \prod_i m_i$.
\end{proposition}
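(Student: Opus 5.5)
The plan is to reduce the statement to the Chinese remainder theorem: the map $\phi:\mathbb{Z}_M \to \mathbb{Z}_{m_1}\times\cdots\times\mathbb{Z}_{m_u}$, $x \bmod M \mapsto (x \bmod m_1,\ldots,x \bmod m_u)$, is a bijection (indeed a ring isomorphism) because the moduli are pairwise coprime. Lossless encoding of a set $S\subset\mathbb{Z}$ then amounts to showing that the encoding of \Cref{def:rns}, extended to negative integers, is injective on $S$. Since an RNS tuple determines $x$ only modulo $M$, this holds exactly when $S$ contains no two distinct integers congruent modulo $M$, i.e. when $S$ lies within a window of at most $M$ consecutive integers.

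First I will record the CRT facts. Injectivity of $\phi$: if $x\equiv y \pmod{m_i}$ for all $i$, then each $m_i$ divides $x-y$, and pairwise coprimality gives $M \mid x-y$. Surjectivity then follows because both sides have $M$ elements, and the explicit inverse $x=\sum_i x_i M_i (M_i^{-1} \bmod m_i) \bmod M$ with $M_i=M/m_i$ gives the decoder. Second, I will check that the negative-number convention of the paper is consistent with this. For $x<0$ with $|x|$ having residues $x_i$, the tuple $(m_i-x_i)$ equals $(x \bmod m_i)$, with the caveat that when $x_i=0$ the entry should be read as $0$ rather than $m_i$. So in every case the encoding of $x\in\mathbb{Z}$ is $\phi(x \bmod M)$.

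Third, I will verify the counting. The interval $\big[-\lfloor M/2\rfloor, \lfloor (M-1)/2\rfloor\big]$ contains $\lfloor (M-1)/2\rfloor+\lfloor M/2\rfloor+1 = M$ integers; splitting $M$ into the even and odd cases confirms this. Hence it is a complete residue system modulo $M$. Reduction mod $M$ is therefore a bijection from the interval onto $\mathbb{Z}_M$, and composing with $\phi$ gives injectivity of the encoding on the interval. Decoding uses the CRT inverse followed by the centered lift: if the CRT output $r\in[0,M)$ exceeds $\lfloor (M-1)/2\rfloor$, return $r-M$, otherwise return $r$.

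The main obstacle is not depth but precision about the paper's sign convention. The step needing care is the residue-zero case in $m_i-x_i$, and arguably the sums used later in the RNS domain, which must stay inside the interval to decode correctly. I will state the convention as $x \bmod m_i$ to avoid the $m_i$ versus $0$ ambiguity, and note that outside the interval the encoding is not injective, since $x$ and $x\pm M$ collide. This shows the range is tight.
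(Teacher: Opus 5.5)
Your proof is correct. The paper gives no proof of this proposition. It is stated in the preliminaries as a standard property of the RNS, cited from the RNS literature. The only supporting remark is that $x$ can be recovered from its residues via the Chinese remainder theorem, as in the worked example in the appendix.

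Your argument supplies the standard justification the paper leaves implicit:
\begin{itemize}
\item the CRT isomorphism $\mathbb{Z}_M \cong \prod_i \mathbb{Z}_{m_i}$;
\item the check that the paper's convention $m_i - x_i$ for negative $x$ agrees with $x \bmod m_i$, up to reading $m_i$ as $0$;
\item the count showing the interval has exactly $M$ elements, so it is a complete residue system modulo $M$, decoded by the centered lift.
\end{itemize}

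Two small remarks.
\begin{itemize}
\item Your ``i.e.'' equating ``no two distinct elements congruent modulo $M$'' with ``contained in a window of at most $M$ consecutive integers'' is only a one-way implication. For $M \ge 2$, the set $\{0, M+1\}$ satisfies the first condition but not the second. You only use the sufficient direction, so the proof is unaffected.
\item The zero-residue caveat is not needed for injectivity itself. A literal entry $m_i$ still determines the class $0 \bmod m_i$, so equal encodings still force $x \equiv y \pmod{M}$. The caveat matters only for your cleaner claim that the encoding equals $\phi(x \bmod M)$.
\end{itemize}
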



%

\section{Setting, Assumptions \& Threat Model} \label{sec:model}
In this section we clarify the setting on which we will focus and define the attacker.

\paragraph{Cross-Silo setting}
SIAs were proposed for the so-called \emph{cross-silo} setting of FL, where the number of clients is limited (typically 2-100 clients \cite{advances_in_FL}) but each has superior communication/computation capabilities. For instance, it can be hospitals cooperating to produce a joint model for some disease. This model is widely applied in real-world scenarios, ranging from medical data \cite{cross-silo-usage-pharma,kakao_healthcare,medical_use_case_2,medical_use_case_3,medical_use_case_4,medical_use_case_5} to the agricultural domain \cite{cross-silo-usage-agri,agri_use_2}. The opposite setting, known as \emph{cross-device}, involves a larger number of clients, each with limited communication and computational capabilities (e.g. smart watches and a health-monitoring model). 

Studying SIAs for the cross-silo setting is more natural, as the attack relies on the fact that each client has a specific attribute linked to its data points. In other words, a sensitive attribute that the adversary infers upon identifying which client trained a given data point. For example, a hospital is typically associated with the medical conditions it specializes in, making it likely that its patients suffer from a respective disease. 
Therefore, in this work, we focus on the cross-silo setting, in line with previous research on SIAs \cite{sia_paper,sia_extended}.

In previous works \cite{sia_paper, sia_extended}, SIAs focused on supervised learning for classification tasks, as we do as well. We also assume that model parameters are clipped in $(-1,1)$; this can be achieved by clipping and properly scaling them. Finally, we focus on the FedAvg aggregation function, but our approach can be generalized to any sum-based aggregation function (cf. \Cref{sec:con}).

\paragraph{Shuffle Model}
We assume a shuffle model architecture (e.g. a mechanism from \cite{mixnn,sun2020ldp, camel,shuffle_dp_FL, adastopk, flame, shuffle_time_series}) and our goal is to propose a defense mechanism which requires minimal modifications. 
A natural question arises: \emph{why trusting a shuffler is more appealing than trusting the central server?} Shuffling is a primitive operation that can be performed distributively. 
Thus, the trust assumption can be distributed across different entities, whereas model aggregation (e.g. FedAvg) cannot (at least without heavy computational cost; we compare with Secure Aggregation in \Cref{secure_aggreagation}). 

To highlight this, we give an example implementation with MixNets (Alg. \ref{alg:mixnet}). Our implementation can be adjusted to three levels of trust assumptions on the shuffler: \emph{Fully Trusted} (allowed to receive plain-text secrets), \emph{Semi-Honest} (follows the protocol but should not be able to see plain-text secrets) and \emph{Partially Malicious} (all but one shuffle entities can deviate from the protocol arbitrarily; for example all but one servers of the MixNets may decide not to shuffle). For the \emph{Partially Malicious} setting, Alg. \ref{alg:mixnet} uses Onion Encryption \cite{tor} (preventing the MixNet’s servers from seeing the messages) and Zero-Knowledge Proofs (ZKPs) (preventing the servers from  modifying the messages). In the \emph{Semi-Honest} setting, ZKPs can be omitted since we assume that no server attempts to alter the data. In the \emph{Fully Trusted} setting, neither Onion Encryption nor ZKPs are necessary.

In other words, \Cref{alg:mixnet} requires that \emph{only one} server of the MixNet be (semi or fully) trusted; all the others can be malicious. If we require each FL client to run their own server in the MixNet (a reasonable design choice given their resources in the cross-silo setting), then each client only has to trust themselves and no other entity, which satisfies our design specification \textbf{S.5}.

\paragraph{Attacker}

We consider an honest-but-curious central server as an adversary who sees the output of a trusted shuffler (or at least partially trusted with MixNets, i.e. at least one server in the MixNet is assumed to be trusted). We assume that the shuffler does not collude with the adversary (e.g. using the aforementioned \Cref{alg:mixnet}). 
The attacker knows that a particular data point $z$ was used in the training phase (e.g. from a MIA) and their goal is to find, using an SIA, the client who owns it.

If no other information is known to the adversary then the standard shuffle model is sufficient to protect against SIAs since shuffling breaks the link between the data owner and their value. However, assuming that the adversary does not know anything more about the target client is arguably a strong assumption. 
In this work, we consider that the attacker knows a small \emph{shadow dataset} only of the target client. A shadow dataset $S_x$ of a target user $x$ is a dataset disjoint from the one that $x$ actually uses for training, which  follows (approximately) the same distribution.  For instance, using once more the hospital example, if the attacker is a pharmaceutical company it might already know that some of its patients were treated by the target hospital.  Note that a similar assumption is often made in other privacy attacks like the MIA \cite{mia_attack,carlini2022membership,choquette2021label}. We clarify that this shadow dataset will only be used to reverse the shuffling (\Cref{sec:recon}); SIAs do not rely on the shadow dataset at all (\Cref{def:sias}). 


\section{Reconstruction Attacks against Shuffling} \label{sec:recon}

In this section we show how an attacker can remap a shuffled model back to its original owner $x$ by using their shadow dataset $S_x$. The attacker here is the central server; in this section we assume no collision with the shuffler who is fully trusted.
We begin with model-level shuffling, the standard approach used in the shuffle model of FL. Then, as illustrated in \Cref{fig:shuffle}, we expand the granularity to include layer-level and parameter-level shuffling, showing that in all three cases reconstruction attacks are possible. The algorithms discussed in the following subsections are placed in the \Cref{app:algorithms}.


\subsection{Model-level Shuffling}
The current widely-used approach in shuffle-based FL is shuffling at the level of model updates. In other words, the shuffler receives the model $w_i$ from each user $i$, chooses a random permutation $\pi$ and outputs $w_{\pi(1)}, \ldots, w_{\pi(n)}$.
However, the adversary can defeat the shuffler by remapping the model back to the target client $x$, by comparing the accuracy of each model on the shadow dataset $S_x$ (\Cref{alg:recon_model}). The model with the highest accuracy on $S_x$ will, most probably, belong to client $x$. 
Observe that this attack is feasible for the adversary; if $n$ is the number of clients and $||S_x||$ is the size of the shadow dataset, then \Cref{alg:recon_model} has $O(n \cdot ||S_x||)$ complexity (as the adversary needs to find the accuracy on each of the samples of $S_x$ on each of the clients). 


\subsection{Layer-level Shuffling} \label{par:recon_layer}
Since standard (model-level) shuffling can be reversed, let us explore shuffling per layer of the neural network \cite{mixnn}. For instance, if we consider a simple CNN with  a convolutional layer $\mathcal{C}$ and 2 FC layers $FC1, FC2$, the shuffler releases:
$\{\mathcal{C}_{\pi_0(1)}, \ldots, \mathcal{C}_{\pi_0(n)}, FC1_{\pi_1(1)}$,$ \ldots, FC1_{\pi_1(n)}$,
$FC2_{\pi_2(1)}$, $\ldots, FC2_{\pi_2(n)} \} $. In other words, for each FC and convolutional layer, the shuffler finds a \emph{new} random permutation $\pi$. This does not affect the accuracy of the joint model because in FedAvg the order of the received layers is inconsequential.
Observe now that the adversary has to run \Cref{alg:recon_model} on every possible combination of layers. Depending however on the model, the number of combinations can be significant, which might make this approach non-feasible. 
To counter this, we propose a strategy which focuses only on the final layer of the neural network (\Cref{alg:recon_layer}). Assuming w.l.o.g. that this is a FC layer, say $FC_L$, the adversary focuses on correctly remapping each $FC_L$ back to its original owner. For the remaining layers, the adversary computes their average (using FedAvg). This reduces the complexity again back to $O(n \cdot ||S_x||)$, for $n$ clients.
The intuition behind this approach is that the final layer of a model tends to contribute more to overfitting. As a result, it more strongly reflects the distributional differences of the inputs. This is precisely what SIAs exploit, enabling the adversary to focus on the final layer, which provides the greatest advantage.

\subsection{Parameter-level Shuffling} \label{sol:param_shuffling}
Next, let us consider shuffling per 
dimension, i.e. for each parameter $p$ of each layer the shuffler releases $p_{\pi(1)}, \ldots, p_{\pi(n)}$, an approach not yet explored in FL, as far as we are aware.
Each random permutation $\pi$ has to be resampled for each parameter $p$ as otherwise the adversary could target the most easily distinguishable parameter (in case there are such outliers) to retrieve the permutation
\footnote{
We assume clients send all parameters at once, with the shuffler applying distinct permutations using sufficient metadata. Alternatively, parameters can be sent individually, which increases communication rounds.
}. 
This approach does not affect the accuracy of the joint model; the central server can compute the aggregated model unimpeded.
This creates an obvious obstacle to the adversary: finding the best accuracy of each possible combination of parameters requires superior computational capabilities.
To overcome the hurdle, we follow a similar approach to Algorithm \ref{alg:recon_layer}. That is, we once again focus only on reconstructing the final layer of the neural network (say $FC_L$ with $k$ parameters), taking the average of the other ones. However, now we first compute the average of $FC_L$ and change each of its parameters $p$ one by one.  For each one, we select to keep the one out of the $n$ choices (one from each client) that creates a model with the best accuracy on $S_x$. To summarize, we copy the global model; each time we change only one of its parameters from the final layer, examining the possible choices and storing the one with the highest accuracy (\Cref{alg:recon_param}), which has a complexity of 
$O(k \cdot n \cdot ||S_x||)$.

\section{Protection against Source Inference Attacks} \label{sec:def} \label{sec:rns}

\subsection{Parameter-level shuffling with Encoding}
While parameter-level shuffling presents the greatest challenge for an attacker to reconstruct, the attack still remains feasible especially for heterogeneous datasets, as we show experimentally in \Cref{sec:eval}.
In this section, we argue that to effectively defend against SIAs, parameter-level shuffling should be combined with encoding, which further enhances the granularity of shuffling.

To achieve robust protection, we increase the shuffling granularity to bits. We show that this ensures that only the aggregated result of each parameter is leaked to the central server (which is necessary to compute the joint model) and not any information about the individual models. First, we compress the values using the residue number system (RNS) and then encode them into bits, using unary encoding (\Cref{def:unary_enc}). An outline of the proposed method is presented below; it is described in detail in \Cref{alg:param_rns}, with an example shown in \Cref{fig:rns}.

Recall from \Cref{sec:prelim} that RNS  encodes an integer $x$ by performing a modulo operation with several pairwise coprime integers (called \emph{moduli}). Since RNS is designed for integers \footnote{Floating-point RNS \cite{rns_float} is unsuitable as we cannot sum the values by their shuffled residues.}, we consider a scaling factor $10^r$, known to both the clients and the server, where the \emph{precision} $r$ corresponds to the number of digits of the fractional part that will be transmitted.
Thus, every parameter $p \in (-1,1)$ becomes $ \lfloor p \cdot 10^r \rfloor$ and is then encoded using RNS. Each residue is then unary encoded and submitted in a separate shuffling round. The server can find the sum of each residue and, using the Chinese remainder theorem, reconstruct the sum of each parameter (which would need to be descaled by $10^r$).

\begin{algorithm}[h]
\caption{\texttt{Parameter-level shuffling with RNS}} 
\label{alg:param_rns}
\KwIn{$n$ clients, precision  $r$, a function $RNS_{enc}(\cdot)$ for RNS encoding and $RNS_{dec}(\cdot)$ for decoding, a function $\mathcal{U}(x,k)$ for unary encoding $x$ in $k$ bits}
\KwOut{$w_{glob}$, the joint model }
\BlankLine
Let $\mathcal{M} = \{m_1,m_2, \ldots , m_u\}$ be pairwise coprime integers where $\prod_{m\in  \mathcal{M}} m < n  (10^r -1)$ \\

\textbf{Client-side (each client $i$ with model $w_i$):} \\
\tcp{Encode each parameter in RNS}
\For{each parameter $p$ in $w_i$}  
{  
    $\{p_1,p_2, \ldots p_u\}   \leftarrow RNS_{enc}( \lfloor p \cdot 10^r \rfloor, \mathcal{M})$
    \newline
    \For{each residue $p_i$ in $\{p_1,p_2, \ldots p_u\}$}  
    {  
        Send $\mathcal{U}(p_i,m_i)$ to the shuffler. 
    }
}

\textbf{Shuffler:}\\
\For{each parameter $p$ }  
    {  
    Concatenate all the received bit vectors to a vector $B_{p,j}$ for each residue  $j$.  \\
    Shuffle (bit-wise) each $B_{p,j}$ and send it to the central server. 
    }
\textbf{Server-side:} \\
    Receive the permutated bit vectors $B'_{p}$ for each parameter $p$. \\
     \For{each $i$-th parameter of $w_{glob}$}  
    {                
        \tcp{Sum  and decode the residues to get the average of each parameter}
        $y \leftarrow (\sum B'_{p,0},\sum B'_{p,1}, \ldots, \sum B'_{p,u})_{RNS}$\\
       $w_{glob}[i]\leftarrow (RNS_{dec}(y,  \mathcal{M})/ {10^r})/n$\\

    }

\textbf{Return} $w_{glob}$
\end{algorithm}

\paragraph{Meeting the required specifications}

Our main insight is 
that revealing a shuffled bit vector is privacy-wise equivalent to revealing its sum (\Cref{prop:sum_eq_shuffle}). We show that \Cref{alg:param_rns} releases only the aggregated model, without revealing any information about the individual local models. Since SIAs work by evaluating the accuracy of the given data point on the individual local models, this reduces their application to random guess, as there are no distinct models to compare.

Formally, we show that \Cref{alg:param_rns} satisfies design specification \textbf{S.1} (Protection), in \Cref{prop:perfect_protection}:

\begin{restatable}[label={prop:perfect_protection}]{theorem}{PerfectProtection}
\label{prop:perfect_protection}
\Cref{alg:param_rns} reduces the accuracy of SIAs to the level of random guessing.
\end{restatable}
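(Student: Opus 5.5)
The plan is to show that the server's entire view under \Cref{alg:param_rns} is a function of the aggregated model alone. The SIA posterior of \Cref{def:sias} then cannot depend on which client owns $z_j$, so the attacker's best guess is uniform over the $n$ clients. The argument follows the insight stated just before the theorem (\Cref{prop:sum_eq_shuffle}): revealing a uniformly shuffled bit vector is equivalent to revealing its sum.

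\textbf{Step 1 (shuffled bits reveal only the sum).} Fix a parameter $p$ and a residue index $j$. The vector $B_{p,j}$ is the concatenation of the unary encodings $\mathcal{U}(p^{(i)}_j, m_j)$ over the $n$ clients, so it has length $n m_j$ and Hamming weight $\sum_i p^{(i)}_j$. I will argue that a uniformly random permutation of its coordinates yields a distribution that is uniform over all binary strings of length $n m_j$ with that weight. Hence the conditional distribution of $B'_{p,j}$ given the inputs depends on them only through $\sum_i p^{(i)}_j$. This is exactly \Cref{prop:sum_eq_shuffle}, which I take as given. Note that unary encoding matters here: every bit is indistinguishable and carries no positional weight, unlike a binary encoding.

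\textbf{Step 2 (independence across parameters and residues).} The shuffler draws a fresh independent permutation for each pair $(p,j)$. Therefore, conditioned on the local models $w_1,\dots,w_n$, the joint view $V=(B'_{p,j})_{p,j}$ factorizes into independent pieces. Each piece depends only on the residue sum $\sum_i p^{(i)}_j$. So $P(V \mid w_1,\ldots,w_n) = f\big((\sum_i p^{(i)}_j)_{p,j}\big)$ for some function $f$. By the bound on $\prod m$ and \Cref{range_rns}, these residue sums determine, and are determined by, $\sum_i \lfloor p^{(i)} 10^r\rfloor$, which is essentially $w_{glob}$.

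\textbf{Step 3 (posterior equals prior).} Apply Bayes' rule to the attacker's posterior over the source vector $s_{\cdot,j}$ given the view $V$ and the record $z_j$. The source assignment influences $V$ only through the local models, and by Step 2 only through the aggregate. The attacker's observation is therefore a permutation-invariant function of the multiset of client models. Because $V$ is invariant under relabeling the clients, swapping the identities of clients $i$ and $i'$ leaves the likelihood of $V$ unchanged. Under the paper's implicit assumption of no prior preference among clients (the shadow dataset is used only for unshuffling), the posterior is $P(s_{i,j}=1\mid V,z_j)=1/n$ for every $i$. Any decision rule thus succeeds with probability $1/n$, which is random guessing. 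In particular, no individual $w_i$ is available to evaluate, so the SIA score $\mathcal{S}(w_i,z_j)$ cannot even be formed.

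\textbf{Main obstacle.} The delicate step is Step 3, namely making precise that ``depends only on the aggregate'' implies ``random guessing.'' The aggregate does depend on the data, so the argument must rest on symmetry: the view is invariant under permuting client labels, and the prior over sources must be exchangeable. I would state this exchangeability explicitly as the formal meaning of the theorem. I would also check the negative-number encoding ($m_j - x_j$) so that residue sums stay well-defined modulo $m_j$.
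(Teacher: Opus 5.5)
\paragraph{Main gap: Step~3 proves too much.}
Your Step~3 uses only one property: the likelihood of $V$ is invariant under relabeling the clients. That holds verbatim for model-, layer- and plain parameter-level shuffling, whose outputs also have laws depending only on unlabeled multisets of client values. So the same argument would show that those defenses also force $1/n$, which \Cref{sec:eval} shows is false in the paper's setting.

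What breaks is the exchangeability you assume. The threat model of \Cref{sec:model} removes it on purpose: the attacker holds a shadow dataset $S_x$ of the target client. Conditioned on $S_x$, the joint law of the client datasets is not invariant under swapping $x$ with another client. Hence $P(V\mid s_{x,j}=1,z_j,S_x)\neq P(V\mid s_{i',j}=1,z_j,S_x)$ in general, and that asymmetry is exactly what Algorithms~\ref{alg:recon_model}, \ref{alg:recon_layer} and \ref{alg:recon_param} exploit. Saying $S_x$ is ``used only for unshuffling'' does not take it out of the conditioning.

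The paper notes that without such side information ordinary shuffling already suffices. So under your formalization the theorem becomes trivial and says nothing specific to \Cref{alg:param_rns}. Moreover, given $S_x$, even the exact aggregate need not induce a uniform posterior, so ``posterior $=1/n$'' is the wrong target.

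What separates \Cref{alg:param_rns} from parameter-level shuffling is that its output contains no per-client values at all, not even unlabeled ones. There is therefore nothing to match against $S_x$, and the paper's proof rests on exactly this. Via \Cref{prop:sum_eq_shuffle} and \Cref{prop:entropy_markovchain} (the data processing inequality applied in both directions), it shows $H(X^j\mid S(\mathcal{Z}^j_u))=H(X^j\mid\Sigma\mathcal{Z}^j_u)$. It then appeals to \Cref{def:sias}: an SIA must score individual $w_i$, and none exist. That is your final sentence of Step~3. Make it the conclusion and drop the exchangeability-based Bayes step.

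\paragraph{Secondary error: the last sentence of Step~2.}
The residue sums are \emph{not} determined by $\sum_i\lfloor p^{(i)}10^r\rfloor$. The Hamming weight of $B'_{p,j}$ is the integer sum $\sum_i (x^{(i)}\bmod m_j)$, not its reduction modulo $m_j$. It therefore also records how many times the residues wrap around $m_j$. For example, with moduli $3,5,7$ and two clients, the inputs $(2,3)$ and $(0,5)$ both aggregate to $5$ but give residue sums $(2,5,5)$ and $(2,0,5)$. Hence your opening claim, that the view is a function of the aggregated model alone, is false as stated.

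What Steps~1--2 do establish is that the law of the entire view depends only on the vector of integer residue sums. That is a sound sufficiency statement. Handling all $(p,j)$ jointly through the independent permutations is also cleaner than the paper's coordinate-wise entropy equalities. This vector is exactly the collection of statistics $\Sigma\mathcal{Z}^j_u$ against which the paper's proof compares; its prose likewise speaks loosely of ``the aggregated result''. State your result relative to this vector rather than to $w_{glob}$.
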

The proof can be found in \Cref{app:proofs}.  \Cref{prop:perfect_protection} holds even when an adversary attempts to correlate information across epochs to improve their SIA success rate, since the proof shows that the only information leaked in each round is the sum of the models (which is useless for SIAs). Since no information about the individual models is revealed, inter-round correlation offers no advantage. 

Then, \textbf{S.2} (Integrability) is also respected since only encoding/decoding operations need to be added. Regarding the communication cost (\textbf{S.3}), assuming $n$ clients, the moduli $m_1,m_2,  \ldots, m_u$ should be picked such that $n\cdot v < \lfloor \frac{(M-1)}{2}\rfloor$ (\Cref{range_rns}) where $M = \prod_i m_i$ and $v = 10^r -1$ (i.e. the largest integer with $r$ digits). Therefore, assuming $u$ moduli with $m_u$ being the largest, the cost is $O(u \cdot m_u)$. 
The number of shuffling rounds can be kept reasonably small, as it will be equal to the number of moduli (\Cref{fig:rns_shuffle_rounds}). In general, small values are usually used for the residues ($m_i$), hence each user has to send a small number of $m_i$ bits in each round, keeping the communication cost reasonable, as we show in \Cref{sec:eval}. 
Furthermore, \textbf{S.4} (Accuracy) is preserved with a sufficient $r$ since RNS encoding is lossless (Claim \ref{rns_accuracy}). Finally, our proposed solution does not require additional trust assumptions than the ones of the shuffle model.  Our proposed defense is fully compatible with the example implementation of shuffling with MixNets (Alg. \ref{alg:mixnet}), where each user only has to trust themselves,  satisfying \textbf{S.5}. 

In summary, the combination of RNS, unary encoding and parameter-level shuffling enables strong privacy without sacrificing much efficiency. We use \Cref{prop:sum_eq_shuffle} to introduce a novel approach to privacy amplification without adding any noise. By combining it with RNS, we can minimize the bit vectors and reduce communication cost. 
With this approach, we provide a novel noise-free yet optimal privacy protection, along with compression to limit efficiency loss.
Alg. \ref{alg:param_rns} is compatible for all three defined trust assumptions on the shuffler \emph{(Fully Trusted}, \emph{Semi-Honest}, \emph{Partially Malicious}). For \emph{Fully Trusted} and \emph{Partially Malicious} we can optimize \Cref{alg:param_rns} more, as we discuss below. Moreover, \Cref{alg:param_rns}  is fully compatible with DP-SGD as encoding and shuffling can be considered post-processing operations of model obfuscation. We further explore the synergy in \Cref{app:synergyDP}. 

\paragraph{Fully Trusted Shuffler}
\Cref{alg:param_rns} does not rely on the shuffler to perform any communication optimizations. This allows for encrypted communication (e.g. Onion Encryption in \Cref{alg:mixnet}), preventing the shuffler from seeing the actual secrets.
However, the typical assumption in the  literature of the shuffle model is that the shuffler is fully trusted \cite{shuffle_dp_FL}. In that case, encryption is not necessary and the communication can be further optimized by coupling \Cref{alg:param_rns} with compression techniques such as Run-Length Encoding (RLE); we describe the corresponding algorithm in \Cref{app:comm_improv}, which we further evaluate in \Cref{sec:eval}.

\paragraph{Partially Malicious Shuffler}
\Cref{alg:param_rns} can be coupled with Zero-Knowledge Proofs (ZKPs) to ensure that no MixNet server alters the data (as in \Cref{alg:mixnet}). Since ZKPs are costly, requiring one for every model parameter would lead to a severe computational overhead. Instead, a more practical approach is to select a subset of parameters to pair with ZKPs, effectively serving as “traps’’ for misbehaving servers. If a MixNet server tampers with the data of any such parameter, the corresponding ZKP will fail, flagging the server as malicious for further action (e.g. removal from the MixNet). These parameters may be selected at random, though one could improve by choosing parameters known to localize memorization more strongly \cite{memorazation_local}.

\subsection{Comparison with Secure Aggregation} \label{secure_aggreagation}
One can view our approach as  essentially turning the shuffler into a secure aggregator from the adversary’s point of view by properly combining shuffling with encoding. A natural question thus arises about how our method compares with Secure Aggregation (SA) \cite{FL_SA}. 

Applying threshold secret-sharing requires $t$ out of $n$ clients to reconstruct a secret, in order to minimize the cost and the risk of dropouts, where $t<<n$ (e.g. $t= 0.5\cdot n +1$ \cite{FL_SA}). Therefore, the protocol is private to any $t-1$ malicious coalitions of clients. On the contrary, shuffling with MixNets requires only 1 out of $n$ servers to be trusted (e.g. \Cref{alg:mixnet}). 
Thus, the two models need to be compared under a common denominator, which is a common trust assumption. This can be achieved by implementing SA with standard threshold secret-sharing with $t=n-1$, aligning with the trust assumption of the shuffle model (and our design specification \textbf{S.5.}). Each user must send $C$ bits to every other user, where $C$ is the size of each share, increasing the communication cost to $C \cdot (n-1)$. Now the liveness constraints become a significant problem since, even if one client drops out the secret cannot be recovered. In contrast, MixNets can still shuffle and release the output even if all but one server drops out. In conclusion, SA incurs a stronger trust assumption and liveness constraints; we compare nevertheless our approach with this setting in \Cref{sec:eval}.

\section {Evaluation} \label{sec:eval}
This section presents the primary experiments; additional evaluation can be found in \Cref{app_recon_more_exps}.

\paragraph{Setting}
For the SIAs we use the same code as \cite{sia_paper, sia_extended} in order to provide comparable results, which also includes the structures of the neural networks. We use a CNN for MNIST and CIFAR-10 and a ResNet-18 for CIFAR-100 (cf. \Cref{app:models}). 
The success rate of the attack is defined as the percentage of the correct guesses on the given data points. Training is performed across 20 rounds for MNIST and across 100 rounds for CIFAR-10 and CIFAR-100 but only the best SIA accuracy is reported.  The number of local epochs is set to $10$. As an upper baseline for the experiments we use the accuracy of the SIA when no shuffling is used (vanilla FL) and as a lower baseline the random guess (uniform over the clients).
We split the dataset among the clients using the Dirichlet distribution with a parameter $\alpha$ (level of heterogeneity). Finally, we assume that the adversary has a shadow dataset of $5\%$ the size  of the target client’s actual training dataset. We also repeat the experiments for the $0.5\%$ and $1\%$ dataset sizes and evaluate the case where the adversary has access to a noisy shadow dataset. Both variations lead to similar conclusions and are therefore in \Cref{app:prot_sias}. Since the communication cost of \Cref{alg:param_rns} depends on the sum of the RNS co-primes, we select the smallest possible pairwise co-primes that satisfy the constrain  $n\cdot v < \lfloor \frac{(M-1)}{2}\rfloor$, where $v=10^r -1$ and $n$ is the number of clients (as explained in \Cref{sec:rns}).

\subsection{Protection against SIAs}

\begin{figure*}[b]
    \centering
    \begin{subfigure}[t]{0.3\textwidth}
        \centering
        \includegraphics[width=\linewidth]{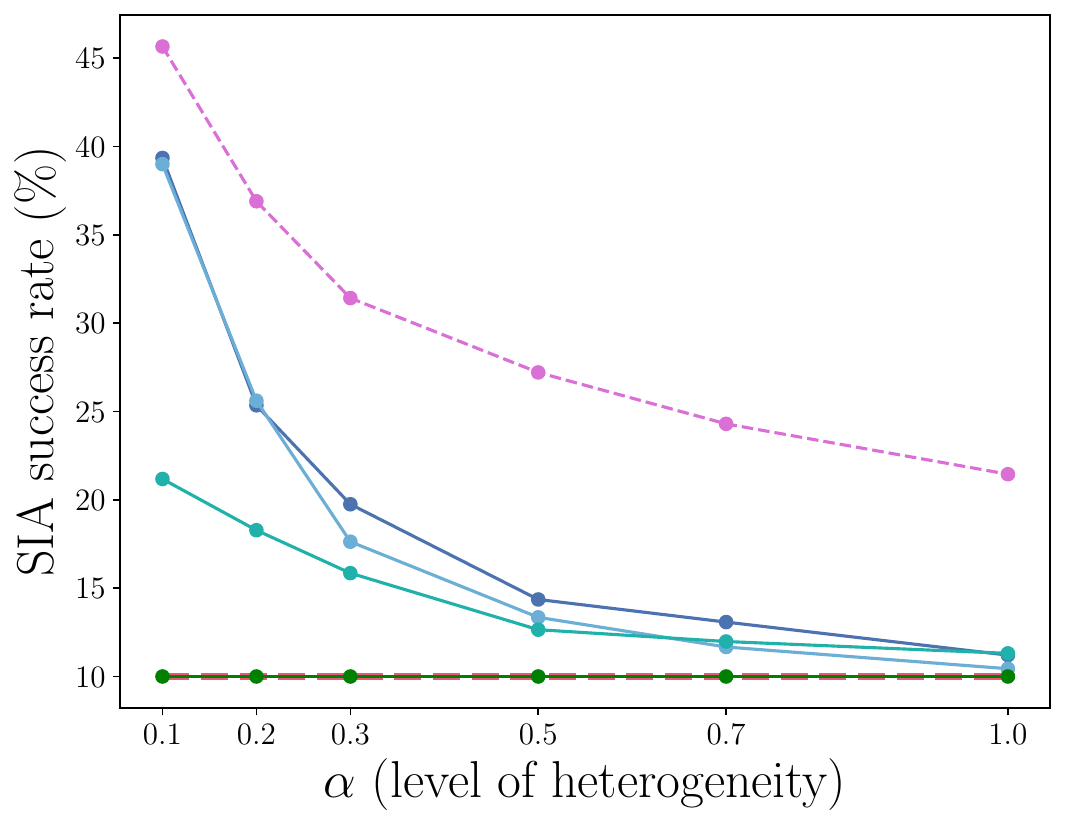} 
        \caption{MNIST/CNN}
        \label{fig:mnist_group}
    \end{subfigure}%
    \hspace{0.03\textwidth} 
    \begin{subfigure}[t]{0.3\textwidth}
        \centering
        \includegraphics[width=\linewidth]{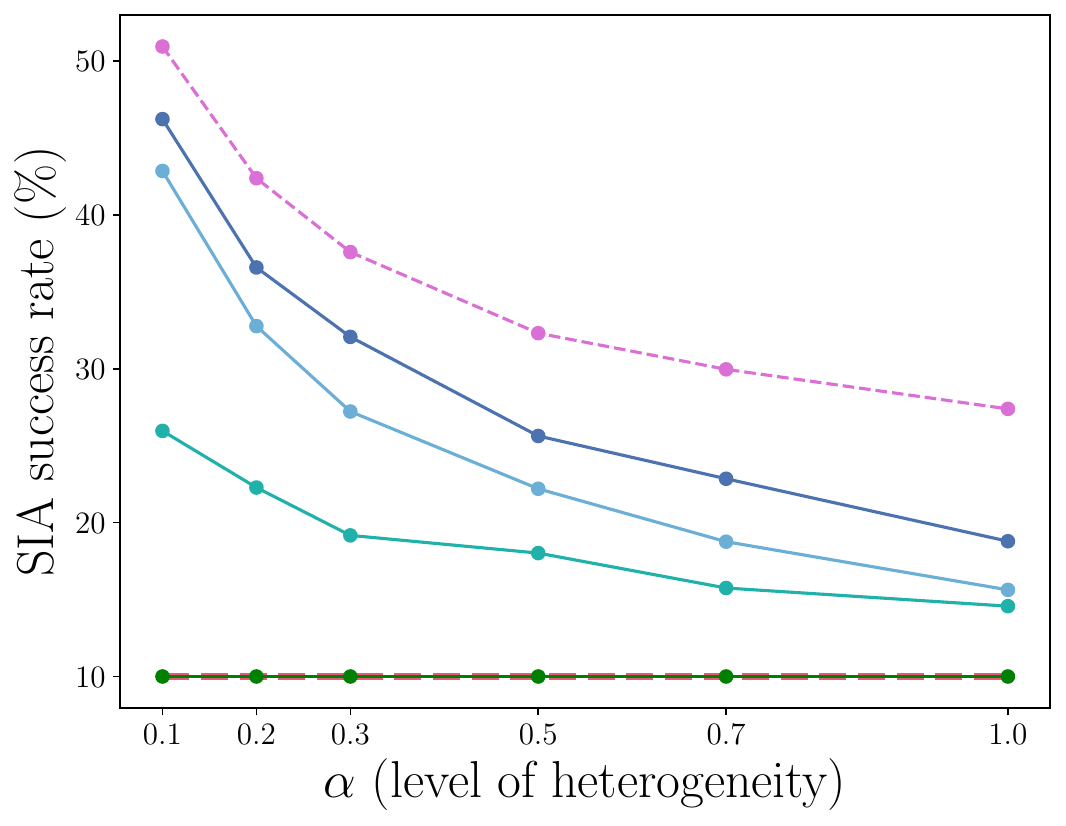} 
        \caption{CIFAR-10/CNN}
        \label{fig:cifar_group_10}
    \end{subfigure}%
    \hspace{0.03\textwidth} 
    \begin{subfigure}[t]{0.3\textwidth}
        \centering
        \includegraphics[width=\linewidth]{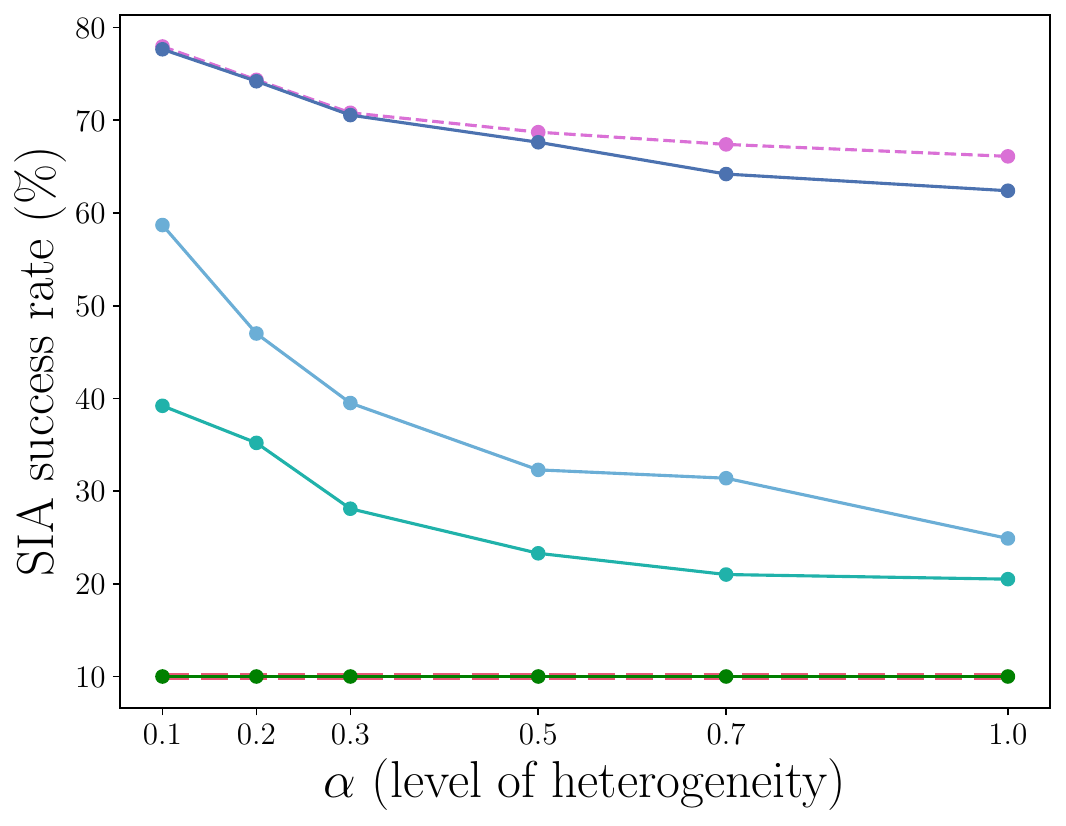} 
        \caption{CIFAR-100/ResNet-18}
        \label{fig:cifar_group_100}
    \end{subfigure}

    \vspace{0.5em}
    \includegraphics[width=\textwidth]{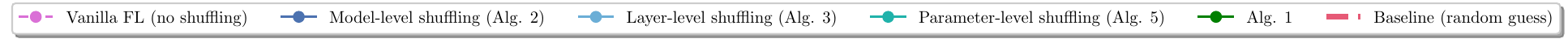} 
    \caption{Success rate of SIA for $10$ clients}
    \label{exp:rec}
\end{figure*}


In this experiment, we test the SIA accuracy on the proposed \Cref{alg:param_rns} and also on the Algorithms \ref{alg:recon_model}, \ref{alg:recon_layer} and \ref{alg:recon_param} that remap the shuffled models back to their original owners from \Cref{sec:recon}. \Cref{exp:rec} shows the results relative to the level of heterogeneity ($\alpha)$. We note that the results below do not depend on the shuffler’s trust assumption, since in all cases we assume that the shuffler leaks no information (either through the use of Onion Encryption or by fully trusting the shuffler).

Model-level shuffling is the least effective choice; \Cref{alg:recon_model} can reliably reverse shuffling, particularly when $\alpha$ is small. 
Notably, SIAs demonstrate a high success rate on CIFAR-100 with ResNet-18, as the increased complexity of the dataset amplifies the differences between the generated local models. 
Layer-level shuffling exhibits slightly lower performance on MNIST and CIFAR-10 than \Cref{alg:recon_model}. On CIFAR-100, however, the difference is larger; for instance, \Cref{alg:recon_model} achieves approximately $\approx 78\%$ accuracy, while \Cref{alg:recon_layer} reaches only $\approx 60\%$ for $\alpha=0.1$. This larger gap may be explained by the increased complexity of the ResNet architecture, where overfitting may also be stored in layers other than the final one, which \Cref{alg:recon_layer} targets. 
Parameter-level shuffling is the best choice among the three. \Cref{alg:recon_param} cannot reconstruct as well the shuffled models; the accuracy of the SIA is reduced, for instance from $50\%$ to $28\%$ on the CIFAR-10  for $\alpha = 0.1$. Still, however, the SIA accuracy remains higher than random guessing.  
Our proposed solution of \Cref{alg:param_rns} always reduces the accuracy to the level of random guess, validating \Cref{prop:perfect_protection}. 
In conclusion, the experiment shows that a straightforward application of shuffling might not be enough to protect against SIAs, especially in scenarios with high heterogeneity which are  arguably the most relevant cases for SIAs. In contrast, our proposed solution provides an optimal defense. 
We also evaluate with more clients and other aggregation functions, such as FedSGD and FedProx, as well as an MLP network, which are placed in \Cref{app_recon_more_exps}, since they lead to similar conclusions.

\paragraph{Protection against DRAs}
Although our main focus in this paper is on defending against SIAs, we also empirically demonstrate in \Cref{app:dras} that \Cref{alg:param_rns} can also mitigate Data Reconstruction Attacks (DRAs).
DRAs have been shown to depend heavily on the training batch size \cite{yin2021see}. By applying \Cref{alg:param_rns}, we shuffle the models at such a fine granularity that only the aggregated model is exposed, effectively simulating a larger batch size. For example, if the batch size is $1$ with $n$ clients, \Cref{alg:param_rns} yields an aggregated model similar to having a single client train with a batch size of $n$.
Specifically, while the reconstruction loss of the original DRA of \cite{zhu2019deep} is $3 \cdot 10^{-4}$, it increases to $0.98$ when \Cref{alg:param_rns} is applied with 10 clients on the CIFAR-10 dataset. Consequently, the reconstructed images become heavily degraded, containing no recognizable features. We elaborate further in \Cref{app:dras}, showing additional experiments.

\subsection{Performance Analysis}

\paragraph{Accuracy of the joint model}

Algorithm \ref{alg:param_rns} only considers the first $r$ digits of the fractional part of each parameter. In this experiment, we show that $r$ can be kept reasonably small  to minimize the communication cost while preserving the accuracy of the joint model. \Cref{fig:acc_with_r} shows that for the MNIST, which requires a simpler model, $r=2$ suffices to reach a level of accuracy comparable to vanilla FL and $r=3$ scores a similar accuracy.  On the other hand, CIFAR-10 and CIFAR-100, which require more complex models, need $r=3$ digits to approach the performance of vanilla FL, and $r=8$ to match it. Note that the results hold for all three trust assumptions on the shuffler, as we assume shuffling does not degrade model accuracy (either with trust or with ZKPs).

\paragraph{Communication Cost}
In \Cref{fig:comm_cost}, we compare the communication cost of our approach against the baseline of vanilla FL (standard 32-bit binary encoding), which however  transmits the whole value (whereas \Cref{alg:param_rns} transmits only the first $r$ digits). As an upper baseline we compare our approach with Secure Aggregation (SA), implemented with standard secret-sharing, with $t=n-1$ (cf. \Cref{secure_aggreagation}), setting the smallest possible $C$ to avoid overflows (for the first $r$ digits). Moreover, we evaluate the combination of \Cref{alg:param_rns} with RLE compression which further reduces the communication cost, when the shuffler is \emph{Fully Trusted} (described in {\Cref{app:comm_improv}). We do not compare SA with the \emph{Partially Malicious} case, as SA assumes trusted or semi-honest clients. The \emph{Partially Malicious} setting incurs the same communication cost as the \emph{Semi-Honest} case for parameters not using ZKPs, plus an extra cost for those that do, depending on the chosen ZKP.

For 10 clients and $r$=5 the expansion factor, compared with vanilla FL, is 1.81$\times$ on CIFAR-100/ResNet  (\Cref{fig:expansion_factor}), which yields $\approx$ 53$\%$ accuracy (compared to $\approx $55$\%$ vanilla accuracy).
This is an arguably manageable additional cost for the cross-silo setting, considering that the state-of-the-art protocol for SA for the cross-device setting exhibits an expansion factor of at least $1.73 \times$ \cite{FL_SA}. Note that a direct comparison between the two is not meaningful as \cite{FL_SA} relies on a stronger trust assumption (\Cref{secure_aggreagation}). Nevertheless, if an expansion of $1.73 \times$ is deemed acceptable for the cross-device setting, a slightly higher factor should also be considered reasonable for the cross-silo setting, given the significantly superior communication capabilities of the clients.
Finally, if we follow the literature of shuffle-model FL and assume a \emph{Fully Trusted shuffler}, then the expansion factor drops to $1.03 \times$, approaching the cost of vanilla FL (\Cref{fig:expansion_factor}).

\Cref{tbl:comm_cost_more_clients} shows that \Cref{alg:param_rns} scales well as the number of clients and $r$ increase. Even with $10^4$ clients (rather unrealistic in the cross-silo setting) and $r=16$, \Cref{alg:param_rns} needs 440 bits per parameter (or 79 bits with compression under a fully trusted shuffler), compared to 32 bits for vanilla FL and $\approx 83$ KBs for SA. Experiments with more complex models and datasets (\Cref{tab:sum_of_findings}) show that $r=4$ suffices for optimal SIA protection with negligible accuracy loss (cf. \Cref{app_diff_complex_models}). 
The computation cost of Alg. \ref{alg:param_rns} for encoding is negligible as it is based on primitive operations. For instance, the 11 million parameters of ResNet can be encoded in just 19 seconds (cf. \Cref{app:comp_cost}).

\begin{figure}[t]
    \centering
    \begin{minipage}[t]{0.49\textwidth}
        \centering
        \begin{subfigure}[t]{0.49\textwidth}
            \centering
            \includegraphics[width=\linewidth]{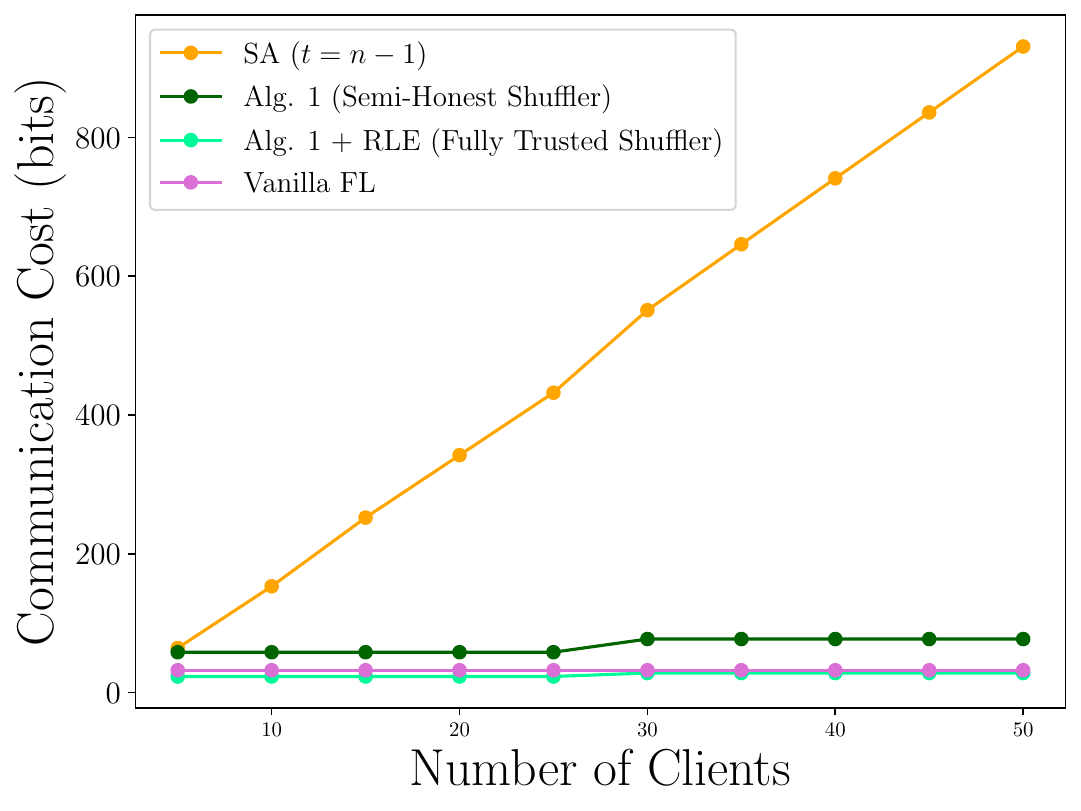}
            \caption{$r=4$}
        \end{subfigure}
        \hfill
        \begin{subfigure}[t]{0.49\textwidth}
            \centering
            \includegraphics[width=\linewidth]{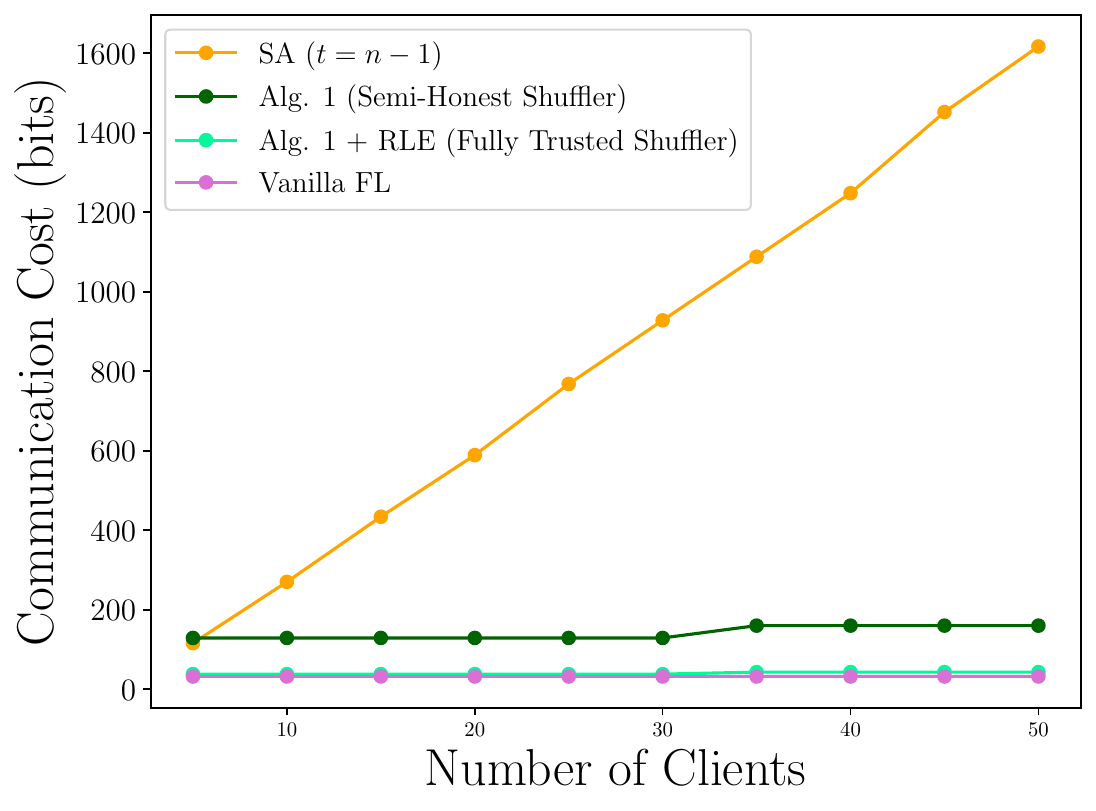}
            \caption{$r=8$}
        \end{subfigure}
        \caption{Communication Cost (bits per client per parameter).}
        \label{fig:comm_cost}
    \end{minipage}
    \hfill
    \begin{minipage}[t]{0.49\textwidth}
        \centering
        \vspace{-7.5em} 
        \includegraphics[width=0.55\linewidth]{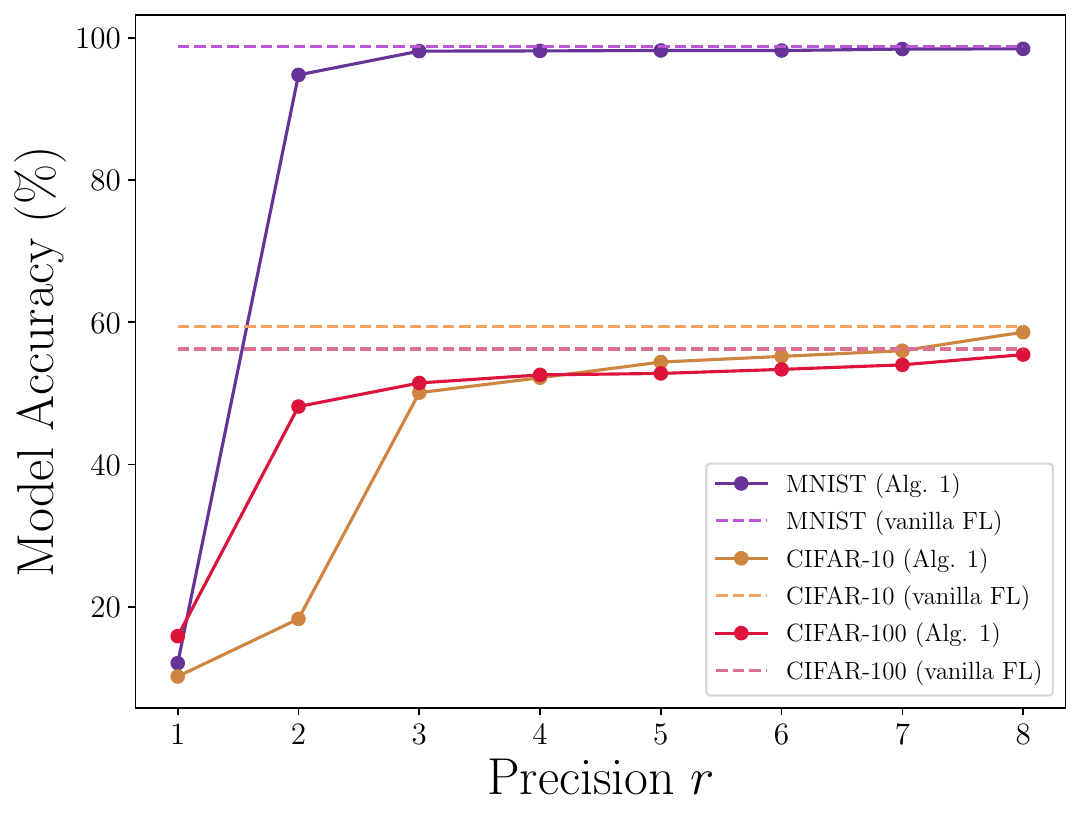}
        \vspace{1.2em} 
        \caption{Model Accuracy (top-1) of Alg. \ref{alg:param_rns} with precision $r$ compared to vanilla FL ($10$ clients).}
        \label{fig:acc_with_r}
    \end{minipage}
\end{figure}

\begin{table}[t]
\centering
\caption{Summary: Communication Cost for negligible (<3$\%$) accuracy loss (\emph{Semi-Honest} shuffler)}
\begin{tabular}{c c c c c}
\toprule
\parbox[c]{1.8cm}{\centering \textbf{Model}} &
\parbox[c]{1.8cm}{\centering \textbf{Dataset}} &
\parbox[c]{1.8cm}{\centering \textbf{Type}} &
\parbox[c]{1cm}{\centering \textbf{$r$}} &
\parbox[c]{2cm}{\centering \textbf{Expansion}} \\
\midrule
CNN & MNIST      & Image       & 3 & 1.04$\times$  \\
CNN & CIFAR-10   & Image       & 5 & 1.81$\times$ \\
ResNet & CIFAR-100  & Image       & 5 & 1.81$\times$ \\
MLP & Synthetic \cite{sia_extended}  & Tabular     & 4 & 1.28$\times$  \\
CNN & HAR \cite{ReyesOrtizAGOP12}       & Time-Series & 4 & 1.28$\times$ \\
Transformer & Newsgroups \cite{Newsgroups20} & Text        & 4 & 1.28$\times$ \\
\bottomrule
\end{tabular}
\label{tab:sum_of_findings}
\end{table}

\section{Conclusion} \label{sec:con}

We explored defenses against SIAs in FL by combining shuffling with encoding. Our proposed mechanism is easily integrable
within the shuffle model of FL and fully compatible with other privacy protection mechanisms, such as DP, as its addition does not affect DP guarantees or further degrade accuracy (cf. \ref{app:synergyDP}). 
Another contribution of our work, potentially of independent interest, is the model reconstruction attacks to defeat a standard shuffler in datasets with a high level of dissimilarity. Future work should evaluate this attack on the state-of-the-art FL mechanisms of the shuffle model, as it has been shown that different mechanisms provide varying privacy guarantees when the shuffler is compromised \cite{metric_shuffle}. In this work, we focused on the shuffle model of FL. Developing similar defenses in settings where there is no shuffler remains an interesting step for future research. Moreover,
we focused on FedAvg but our approach can be generalized to any other sum-based aggregation function, such as  \cite{li2020federated, reddi2020adaptive, li2019fedmd, ghosh2020efficient}, where the central server sums the parameters and then performs further processing. 
To verify, we performed experiments on FedSGD and FedProx (cf. \ref{app:other_agg_fun}). 
Our work covers the vital class of sum-based aggregation functions which is widely used, for example in medical applications \cite{medical_use_case_2,medical_use_case_3,medical_use_case_4,medical_use_case_5}. On the other hand, our approach is not directly applicable to non-sum-based frameworks, such as median-based, clustering-based and ranking-based (e.g. \cite{yin2018byzantine, ghosh2020efficient, iqbal2025dynamic}). We include a preliminary discussion and evaluation of applying our proposed method to such cases in \Cref{non_sum_based}, which we aim to explore further.
We did not consider disaggregation attacks \cite{disaggregation1}, where the server observes only the joint model and attempts to recover local models. Combining them with SIAs is an interesting direction for future work, 
Finally, since our technique (Alg. \ref{alg:param_rns}) reveals only the sum of the secrets, it can serve as a secure aggregation protocol. It would be interesting to explore other use cases where revealing only the aggregated model suffices for protection, such as DRAs, which we evaluated in this work.

\section*{acknowledgments}
This research was supported by the EU project ELSA: \emph{European Lighthouse on Secure and Safe AI} (EU grant agreement  101070617)  and by the PREP Cybersecurité project IPoP.
The work of Andreas Athanasiou was also supported by the European Commission under the Horizon Europe Programme as part of the project RECITALS: \emph{An open-source platform for Resilient sECure digITAL identitieS} (EU grant agreement 101168490).

\bibliography{references.bib}
\bibliographystyle{iclr2026_conference}

\clearpage

\section*{Supplementary Material}
\paragraph{Ethics statement} 
As discussed in \Cref{sec:introduction}, SIAs can pose a significant real-world privacy risk. Our proposed defense offers robust protection, making a positive contribution to user privacy. Notably, its ease of integration allows clients to implement the defense promptly.
On the other hand, the proposed reconstruction attacks of \Cref{sec:recon} can undermine the privacy guarantees of shuffle-based FL mechanisms, thereby compromising the privacy of end users. Additional work is needed to verify their effectiveness across other shuffle-based mechanisms.

\paragraph{Reproducibility statement}
To make integration easier with current shuffle model mechanisms, we have implemented \Cref{alg:param_rns} and placed its code in a public repository
\footnote{\url{https://github.com/andathan/sia_defence}}.
Using this same code, all experiments reported in the main body and appendix can be reproduced; a corresponding README file is included in the repository as guidance.

\paragraph{Use of LLMs} We used Grammarly to fix awkward phrasing, grammatical errors and to improve the overall clarity of sentences.

\newpage
\appendix
\onecolumn

\section*{Appendix}
\section{Proofs} \label{app:proofs}
In this section we include the missing proof of \Cref{prop:perfect_protection}. We will use the notion of  \emph{Markov chain} in the information-theoretic sense \cite{information_theory_book}. We recall that 
three random variables $X$, $Y$ and $Z$ form a Markov chain, denoted by 
$X \rightarrow Y \rightarrow Z$, 
if $X$ and $Z$ are conditionally independent given $Y$. Formally, $\forall x,y,z, \; P[Z=z|X=x,Y=y]= P[Z=z|Y=y]$, or, equivalently, $\forall x,y,z, \; P[X=x|Z=z,Y=y]= P[X=x|Y=y]$ where $P$ stands for probability. Note that if $X \rightarrow Y \rightarrow Z$, then also $Z \rightarrow Y \rightarrow X$.

We will also use other notions from information theory, namely \emph{Shannon entropy} $H(X)$  and \emph{conditional entropy} $H(X|Y)$, where  $X$ and $Y$ are random variables.  We recall that $H(X)$ represents the uncertainty about the value of $X$, and  $H(X | Y)$ represents the remaining uncertainty about $X$  after we learn $Y$. 

First, we  recall the well-known principle in information theory known as the
\emph{data processing inequality} (DPI), which states that \emph{postprocessing} cannot increase the amount of information. 

\begin{proposition}\label{DPI}[Data processing inequality \cite{information_theory_book}]
Let $X \rightarrow Y \rightarrow Z$ be a Markov chain. Then the Shannon conditional mutual entropy $H(\cdot|\cdot)$ satisfies the following property:
\[
H(X | Y) \leq H(X | Z).
\]
\end{proposition}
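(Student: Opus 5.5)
We will give the short direct argument, assuming (as throughout the appendix) that $X$, $Y$, $Z$ are discrete random variables with finite entropies. The plan has two steps, chained as
\[
H(X\mid Y) = H(X\mid Y,Z) \leq H(X\mid Z).
\]
The equality uses the Markov property, and the inequality is the general fact that conditioning cannot increase entropy.

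\emph{Step 1 (equality).} We use the second formulation of the Markov chain recalled above: $P[X=x\mid Z=z,Y=y]=P[X=x\mid Y=y]$ for all $x,y,z$ with $P[Y=y,Z=z]>0$. Expanding the definition,
\[
H(X\mid Y,Z) = -\sum_{x,y,z} P[x,y,z]\log P[x\mid y,z].
\]
Replacing $P[x\mid y,z]$ by $P[x\mid y]$ and summing out $z$ leaves $-\sum_{x,y}P[x,y]\log P[x\mid y]$, which is $H(X\mid Y)$. This step is routine.

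\emph{Step 2 (conditioning reduces entropy).} We show $H(X\mid Y,Z)\le H(X\mid Z)$, equivalently that $I(X;Y\mid Z)=H(X\mid Z)-H(X\mid Y,Z)$ is nonnegative. Write
\[
I(X;Y\mid Z)=\sum_{x,y,z}P[x,y,z]\log\frac{P[x,y\mid z]}{P[x\mid z]\,P[y\mid z]}.
\]
Fix $z$ with $P[z]>0$. The inner sum over $x,y$ is the Kullback--Leibler divergence between $P[\cdot,\cdot\mid z]$ and the product $P[\cdot\mid z]P[\cdot\mid z]$. Gibbs' inequality (Jensen applied to the concave function $\log$) shows each such divergence is nonnegative. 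Averaging over $z$ with weights $P[z]$ preserves nonnegativity.

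This is the only step with real content. The point needing care is the support: terms with $P[x,y,z]=0$ are dropped under the convention $0\log 0=0$. The sum defining the divergence then runs only over the support of $P[\cdot,\cdot\mid z]$, where the product distribution is also positive, so Jensen applies. It gives $-D\le\log\sum_{\text{supp}}P[x\mid z]P[y\mid z]\le\log 1=0$.

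Combining the two steps yields $H(X\mid Y)\le H(X\mid Z)$. Equivalently, subtracting both sides from $H(X)$ gives $I(X;Y)\ge I(X;Z)$, the more familiar form of the data processing inequality, which we may note as a remark.
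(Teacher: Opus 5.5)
Your proof is correct. The paper gives no proof of its own---it simply cites \cite{information_theory_book}---and your argument is the standard textbook proof recast in conditional entropies. There one expands $I(X;Y,Z)$ by the chain rule as $I(X;Z)+I(X;Y\mid Z)=I(X;Y)+I(X;Z\mid Y)$, sets $I(X;Z\mid Y)=0$ by the Markov property, and discards $I(X;Y\mid Z)\ge 0$; your Step~1 and Step~2 are exactly these two facts.

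The textbook route yields the mutual-information form $I(X;Y)\ge I(X;Z)$, and the paper's form then follows by subtracting from $H(X)$. Your version proves the entropy form directly, is self-contained, and handles zero-probability terms explicitly. Your finiteness assumption is harmless, since the only formal use of this proposition (in Proposition~\ref{prop:sum_eq_shuffle}) involves fixed-length bit vectors and their sums.

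Your Step~1 identity $H(X\mid Y,Z)=H(X\mid Y)$ is also exactly the step the paper invokes, labelled ``Markov chain definition'', in the proof of Proposition~\ref{prop:entropy_markovchain}. Finally, your closing reading $I(X;Y)\ge I(X;Z)$ is the right one; the paper's verbal gloss right after the statement puts the comparison the wrong way around.
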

Since $H(X | Y)$ represents the remaining uncertainty about $X$  after we learn $Y$, the above formula  $H(X | Y) \leq H(X | Z)$ means that the amount of information that $Y$ carries about $X$ is less than the amount of information that $Z$ 
(the result of postprocessing $Y$) carries about $X$. 

To prove \Cref{prop:perfect_protection}, we first prove that releasing a shuffled unary vector carries the same amount of information about the initial vector as releasing its sum:

\begin{proposition} \label{prop:sum_eq_shuffle}
Let the random variable $V$ range over vectors of bits of fixed length $k$, namely over $\{0,1\}^k$, 
let $S(V)$ represent the shuffling of $V$, and let $\Sigma V$ represent the sum of all elements of $V$. Assume that the shuffler is a probabilistic function that depends only on the number of $1$'s in the vector.
Then, the information that $S(V)$ carries about $V$ is the same as the information that $\Sigma V$ carries about $V$. Namely: 
\[
H(V|S(V)) = H (V| \Sigma V).
\]

\end{proposition}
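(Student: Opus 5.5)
The plan is to obtain the equality as two inequalities, both coming from the data processing inequality (\Cref{DPI}), by showing that $V$, $S(V)$ and $\Sigma V$ form Markov chains in both orders. The key observation is that $\Sigma V$ is a deterministic function of $S(V)$: shuffling only permutes positions, so the number of $1$'s in $S(V)$ equals the number of $1$'s in $V$. Hence $\Sigma V = \Sigma S(V)$.

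\textbf{Step 1: $V \rightarrow S(V) \rightarrow \Sigma V$.} Since $\Sigma V$ is a function of $S(V)$, we have $P[\Sigma V = s \mid V=v, S(V)=w] = \mathbb{1}[\Sigma w = s] = P[\Sigma V = s \mid S(V)=w]$. This is a Markov chain, so \Cref{DPI} gives
\[
H(V \mid S(V)) \leq H(V \mid \Sigma V).
\]

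\textbf{Step 2: $V \rightarrow \Sigma V \rightarrow S(V)$.} Here we use the hypothesis that the shuffler is a probabilistic function depending only on the number of $1$'s. Formally, there is a kernel $Q_s$ on $\{0,1\}^k$ such that $P[S(V)=w \mid V=v] = Q_{\Sigma v}(w)$. For any $v$ with $\Sigma v = s$ we then get $P[S(V)=w \mid V=v, \Sigma V = s] = Q_s(w)$, and averaging over $v$ gives the same value for $P[S(V)=w \mid \Sigma V=s]$. So $S(V)$ is conditionally independent of $V$ given $\Sigma V$. Applying \Cref{DPI} to this chain yields
\[
H(V \mid \Sigma V) \leq H(V \mid S(V)).
\]
Combining the two inequalities gives the claimed equality.

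\textbf{Main obstacle.} The delicate step is Step 2, namely making the hypothesis on the shuffler precise. A uniform random permutation of $v$ produces a uniform distribution over all vectors of weight $\Sigma v$, so it depends on $v$ only through its sum. I would state this explicitly as the meaning of ``depends only on the number of $1$'s''. I would also note that conditioning events of zero probability can be ignored.
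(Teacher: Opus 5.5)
Your proposal is correct and follows the paper's proof essentially verbatim. The paper also establishes the two Markov chains $V \rightarrow S(V) \rightarrow \Sigma V$ (because $\Sigma V = \Sigma S(V)$) and $V \rightarrow \Sigma V \rightarrow S(V)$ (from the hypothesis on the shuffler), then applies the data processing inequality to each to get the two opposite inequalities. Your kernel formulation $P[S(V)=w \mid V=v] = Q_{\Sigma v}(w)$ just makes explicit the conditional-independence checks that the paper states informally.
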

\begin{proof}
Since the sum of the elements of $V$ is the same as the sum of the elements of any shuffling of $V$, we have that $V$, $S(V)$ and $\Sigma V$ form a Markov chain. Namely: 
\[
V \rightarrow S(V) \rightarrow \Sigma V
\]
Therefore, by \Cref{DPI}, we have:
\begin{align}
    H(V|S(V)) \leq H(V|\Sigma V)\label{entropy_sum_shuffle_1}
\end{align}

On the other hand, since we are assuming that the result of the shuffling of $V$ (i.e., the probability distribution over the shuffled vectors) depends only on the number of $1$'s in $V$, we have that also $V$, $\Sigma V$ and $S(V)$ form a Markov chain:
\[
V \rightarrow \Sigma V \rightarrow  S(V)
\]

From which we derive, applying again Theorem \ref{DPI}, that 
\begin{align}
   H(V|\Sigma V)  \leq  H(V|S(V)) \label{entropy_sum_shuffle_2}.  
\end{align}
Finally, from Equations (\ref{entropy_sum_shuffle_1}) and (\ref{entropy_sum_shuffle_2}), we can conclude.
\end{proof}

 We note that a related result, formulated in terms of probabilities,  was shown in \cite{dp_shuffle,metric_shuffle} in the context of differential privacy.

Let us now prove the following property of Markov chains:
\begin{proposition}\label{prop:entropy_markovchain}
    Consider the following Markov chains:
    \begin{equation} \label{first_chain}
        X \rightarrow Y \rightarrow Z\;,
    \end{equation}
    \begin{equation}  \label{second_chain}
        X \rightarrow Y \rightarrow W\,.
    \end{equation}
    
    If $H(Y|Z) = H(Y|W)$, then:
    \[
    H(X|Z) = H(X|W).
    \]
   
\end{proposition}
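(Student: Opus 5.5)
The natural route is to expand both conditional entropies of $X$ through $Y$ using the chain rule, and then move the hypothesis $H(Y|Z)=H(Y|W)$ across. From the Markov chain (\ref{first_chain}) we have $H(X|Y,Z)=H(X|Y)$, so
\begin{align*}
H(X|Z) &= H(X|Y,Z) + I(X;Y|Z) = H(X|Y) + H(Y|Z) - H(Y|X,Z),
\end{align*}
and in exactly the same way (\ref{second_chain}) gives $H(X|W)=H(X|Y)+H(Y|W)-H(Y|X,W)$. Subtracting and using $H(Y|Z)=H(Y|W)$, the claim becomes equivalent to
\[
H(Y|X,Z)=H(Y|X,W).
\]
So the whole plan is to establish this residual identity.

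This residual step is the main obstacle. With only the two separate chains (\ref{first_chain}) and (\ref{second_chain}) and the equality $H(Y|Z)=H(Y|W)$, nothing links $Z$ and $W$ to each other. Taken literally, the identity can fail: the earlier counterexample shows the statement as worded is not valid for arbitrary $Z,W$. The proof can therefore only go through if I use the additional structure that is present in the way the proposition is applied. There $Y=V$, $Z=\Sigma V$ and $W=S(V)$ are both generated from $Y$, so the joint chain $X\rightarrow Y\rightarrow (Z,W)$ holds. Moreover $Z$ is a deterministic function of $W$, so that $Y\rightarrow W\rightarrow Z$. I will state this explicitly as the reading of the hypotheses under which I prove the statement, rather than claim the general version.

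Under this reading I would argue as follows. Because $Z$ is a function of $W$, we have $H(Y|Z)=H(Y|W)+I(Y;W|Z)$. The hypothesis then forces $I(Y;W|Z)=0$, which means $Y\rightarrow Z\rightarrow W$. Next, the chain rule gives $I(X,Y;W|Z)=I(Y;W|Z)+I(X;W|Y,Z)$. The first term is $0$ by the previous step. The second term is $0$ by the joint chain $X\rightarrow Y\rightarrow(Z,W)$. Hence $I(X,Y;W|Z)=0$, and since $I(X;W|Z)\le I(X,Y;W|Z)$, also $I(X;W|Z)=0$. Finally, using that $Z$ is a function of $W$ once more,
\[
H(X|Z)=H(X|Z,W)+I(X;W|Z)=H(X|W).
\]
This also gives $H(Y|X,Z)=H(Y|X,W)$ directly, since $I(Y;W|X,Z)\le I(X,Y;W|Z)=0$.

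In the write-up I would present the chain-rule reduction first, so that it is clear exactly which identity is needed. I would then state explicitly that the two chains are understood jointly, as $X\rightarrow Y\rightarrow(Z,W)$ with $Z$ a function of $W$, as in the application to \Cref{prop:sum_eq_shuffle}, and flag that the proposition cannot hold without some such link between $Z$ and $W$.
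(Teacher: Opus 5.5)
Your proof is correct under the strengthening you state, but you repair the proposition (which is indeed false as worded) at a different place than the paper does. The paper's proof is your first reduction, with the residual $H(Y|X,Z)$ removed by a hypothesis on the \emph{left} end of the chain. It writes $H(X|Z)=H(X,Y|Z)$ with the justification ``$X\rightarrow Y$ is deterministic'', i.e.\ it tacitly assumes $Y$ is a function of $X$. This holds in the application, since $Y^j_u$ consists of the residues $x \bmod m_u$ of the entries of $X^j$. Then $H(Y|X,Z)=H(Y|X,W)=0$, so $H(X|Z)=H(X|Y)+H(Y|Z)$ and likewise for $W$, and the claim follows for arbitrary $Z,W$ with no relation between them.

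You instead impose structure on the \emph{right} end: the joint chain $X\rightarrow Y\rightarrow(Z,W)$, and $Z$ a function of $W$. Both also hold in the application, because the shuffle uses fresh randomness and $\Sigma V=\Sigma S(V)$. Your steps $I(Y;W|Z)=0$, then $I(X,Y;W|Z)=0$, then $I(X;W|Z)=0$ are all valid.

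What each route buys: yours does not need $Y$ to be determined by $X$, and it gives the conclusion in the stronger form $X\rightarrow Z\rightarrow W$, i.e.\ the posterior on $X$ given the shuffled output equals the posterior given the sum. The paper's route is a two-line computation and needs no coupling of $Z$ and $W$.

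Two corrections to your write-up. First, your remark that the statement ``cannot hold without some such link between $Z$ and $W$'' is too strong. Determinism of $Y$ given $X$ suffices with $Z,W$ unrelated, so say only that some extra hypothesis is needed. Second, the ``earlier counterexample'' you cite does not appear in your write-up, so supply one. For example, take $Y=(Y_1,Y_2)$ with $Y_1,Y_2$ independent uniform bits, $X=Z=Y_1$ and $W=Y_2$. Both chains hold and $H(Y|Z)=H(Y|W)=1$, yet $H(X|Z)=0\neq 1=H(X|W)$.
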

\begin{proof}
    From the first Markov chain in \Cref{first_chain}, we deduce:
\begin{align*}
    H(X|Z) &= H(X, Y | Z)                         && \text{[$X \rightarrow Y$ is deterministic]}  \\
           &= H(X | Y, Z) + H(Y | Z)              && \text{[Chain rule]} \\
           &= H(X | Y) + H(Y | Z)                 && \text{[Markov chain definition]}  
\end{align*}
    By applying the same reasoning to the second Markov chain in \Cref{second_chain}, we obtain:
    \begin{align*}
    H(X|W) &= H(X | Y) + H(Y | W).
\end{align*}
Since $H(Y|Z)  = H(Y|W)$, we  conclude that $H(X|Z)  = H(X|W)$.
\end{proof}

Now we are ready to prove our main theorem about the protection of our mechanism. 

\PerfectProtection*

\begin{proof}
We will show that the result of \Cref{alg:param_rns} reveals no more information than the aggregated result (i.e. the joint model), thereby reducing the SIA accuracy to that of a random guess by definition, as there are no distinct models to compare. 
We assume $r$ is large enough so that the whole value is transmitted, which is the setting which leaks the most information.

Let $X^j$ be the random variable ranging over the collection of values that the users hold for some parameter $j$. Define for each residue $m_u$
\[\mathcal{Y}^j_u :=\{x \bmod m_u : x \in X^j\}
\]
and let ${Y}^j_u$ be the corresponding random variable. 
The shuffler in \Cref{alg:param_rns} receives the unary encoded values, concatenates them to a single vector, and shuffles them altogether,  outputting a shuffled version $S(\mathcal{Z}^j_u)$ of $\mathcal{Z}^j_u$,  where
 \[
\mathcal{Z}^j_u := \bigcup_{y \in \mathcal{Y}^j_u} \mathcal{U}(y, m_u)
\]
Namely, $\mathcal{Z}^j_u$ is the concatenation of the unary encodings of the elements of $\mathcal{Y}^j_u$. 
Note that $Y$ depends only on  ${Y}^j_u$. The correlation between $X^j$, ${Y}^j_u$, and $S(\mathcal{Z}^j_u)$ (the latter being the vector observed by the central server), can be described by the following Markov chain:
\begin{equation}\label{eq:MC1}
   X^j\rightarrow  {Y}^j_u\rightarrow S({\mathcal{Z}^j_u})
\end{equation}

Now, let us consider a scenario where the central server observes the aggregation (i.e., the summation) of all the values in  $Y^j_u$, which we denote by $\Sigma Y^j_u$, rather than the shuffled version of the concatenation of their encodings.
We have the following Markov chain:
\begin{equation}\label{eq:MC2}
   X^j \rightarrow   Y^j_u \rightarrow \Sigma Y^j_u 
\end{equation}
Since the summation of the elements of $Y^j_u$  is the same as the summation of their concatenation $Y$, we can rewrite~\autoref{eq:MC2} as:
\begin{equation}\label{eq:MC3}
     X^j \rightarrow  Y^j_u \rightarrow \Sigma \mathcal{Z}^j_u  
\end{equation}
From ~\autoref{eq:MC1} and ~\autoref{eq:MC3} we derive, by \Cref{prop:sum_eq_shuffle}: 
\[H(Y^j_u| S(Y))= H(Y^j_u| \Sigma Y).
\]
Given the above equality, and Equations (\ref{eq:MC1}) and (\ref{eq:MC3}),  we can apply \Cref{prop:entropy_markovchain} 
and conclude that, for each parameter $j$ and each residue $m_u$:
\[
H \big(X^j |S({\mathcal{Z}^j_u}) \big) = H \big(X^j |\Sigma \mathcal{Z}^j_u   \big).
\]
\end{proof}

The proof of \Cref{prop:perfect_protection} relies on the information that the central server first observes, since post-processing cannot increase the amount of information (\emph{data processing inequality}).
However, recall that \Cref{alg:param_rns} instructs the server to decode the result.
    Therefore, the server can still recover the aggregated model using RNS addition  (\Cref{sec:prelim}). The only potential loss of information may arise from the parameter $r$ (precision i.e., the digits transmitted after the decimal point). Since RNS encoding is lossless, we arrive at the following claim:

\begin{claim} \label{rns_accuracy}
    Assuming a sufficiently large $r$, \Cref{alg:param_rns} does not impact the accuracy of the joint model.
\end{claim}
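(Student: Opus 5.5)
The plan is to show that, once $r$ is large enough, the value that \Cref{alg:param_rns} assigns to each coordinate of $w_{glob}$ is exactly the FedAvg value of the (truncated) local parameters. Accuracy is then unchanged as $r \to \infty$. The argument tracks one parameter $j$ through the pipeline: scaling, RNS encoding, unary encoding, shuffling, summation and CRT decoding. At each stage we check that the quantity the server ends up with is $\sum_i \lfloor p^{(i)}_j 10^r \rfloor$.

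\textbf{Step 1: shuffling and unary encoding preserve residue sums.} For client $i$ and modulus $m_u$, the unary vector $\mathcal{U}(p^{(i)}_{j,u}, m_u)$ has exactly $p^{(i)}_{j,u}$ ones. Concatenation and any permutation leave the number of ones unchanged. Hence the server's count $\sum B'_{p,u}$ equals $\sum_i p^{(i)}_{j,u}$. Since addition in the RNS domain is componentwise modulo $m_u$ (\Cref{sec:prelim}), this count reduced mod $m_u$ is the $u$-th residue of $T_j := \sum_i \lfloor p^{(i)}_j 10^r\rfloor$. The negative-number convention $m_u - x_u$ is handled the same way, because it is just the residue of $-|x|$ modulo $m_u$.

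\textbf{Step 2: CRT recovers $T_j$ exactly.} Each $|\lfloor p 10^r\rfloor| \le 10^r - 1 = v$ because $p \in (-1,1)$. Therefore $|T_j| \le n v$. The moduli are chosen so that $n v < \lfloor (M-1)/2 \rfloor$, so $T_j$ lies in the range of \Cref{range_rns}. By the Chinese remainder theorem, the decoding $RNS_{dec}$ returns $T_j$ itself and not merely a representative modulo $M$. This range check is the main obstacle. The inequality printed in \Cref{alg:param_rns} ($\prod m < n(10^r-1)$) appears to be a typo, so I will rely on the condition stated in the text, $n v < \lfloor (M-1)/2\rfloor$. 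I will also verify that the symmetric range covers negative sums.

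\textbf{Step 3: take the limit in $r$.} The server outputs $w_{glob}[j] = T_j/(10^r n)$. Since $|p - \lfloor p 10^r\rfloor/10^r| < 10^{-r}$, we get
\[
\Bigl| w_{glob}[j] - \tfrac{1}{n}\textstyle\sum_i p^{(i)}_j \Bigr| < 10^{-r}.
\]
So the joint model converges coordinatewise to the FedAvg model, which here uses equal weights, or $N_i/N$ weights if clients prescale. This holds uniformly over all parameters.

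Accuracy, meaning the prediction on a fixed finite test set, is a function of the parameters. It is locally constant away from decision-boundary ties, and this extends across training rounds by induction since each round's error is at most $10^{-r}$. Hence for sufficiently large $r$ the predictions, and thus the accuracy, coincide with those of vanilla FL. This is the sense in which the claim holds, and it matches the empirical observation that $r=8$ already matches vanilla accuracy.
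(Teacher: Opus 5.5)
Your proposal is correct and is essentially the paper's own argument: summing the shuffled residue counts recovers the aggregate exactly because RNS is lossless, so the only loss is truncation to $r$ digits. You make it rigorous where the paper only asserts it, by checking the decoding range (you are right that the inequality printed in the algorithm is a typo for $nv<\lfloor (M-1)/2\rfloor$), bounding the truncation error by $10^{-r}$, and explaining why $10^{-r}$-close parameters give identical test predictions. Two small repairs are needed: for $p\in(-1,-1+10^{-r})$ the floor gives $\lfloor p\,10^r\rfloor=-10^r$, so $|T_j|\le nv$ is false and the condition $nv<\lfloor (M-1)/2\rfloor$ need not cover $T_j=-n\,10^r$ (encode $-\lfloor |p|\,10^r\rfloor$ for negative $p$, or require $n\,10^r\le\lfloor M/2\rfloor$); and the induction over rounds needs continuity of local training in its starting point, with minibatch randomness coupled, not merely the per-round $10^{-r}$ bound, since only this gives an accumulated deviation tending to $0$ for a fixed number of rounds.
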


\section{Algorithms} \label{app:algorithms}

\subsection{Reconstruction Attacks of Section \ref{sec:recon}}
We begin with \Cref{alg:recon_model} which reverses model-level shuffling by remapping the models back to their original owners using the accuracy on the shadow dataset $S_x$ of the target client $x$.  

Then, for  defeating layer-based shuffling, 
we assume w.l.o.g. 2 convolutional layers and 2 FC layers,  presenting \Cref{alg:recon_layer}. Note that for notational convenience, we use the FedAvg function to average layers (of neural networks) rather than entire models; the definition of the function remains similar.

Finally, \Cref{alg:recon_param} shows a possible implementation of remapping parameter-level shuffling, assuming again w.l.o.g. 2 convolutional layers and 2 FC layers.  Note that $\mathcal{FC}_{2,0}$ notates the shuffled set of the first parameters of the final FC layer $\mathcal{FC}_2$. Recall that each shuffle set has size $n$, if $n$ is the number of clients.
We assume that $\mathcal{FC}_2$ has $k$ parameters.

\begin{algorithm}[h]
\SetKwInOut{Input}{Input}
\SetKwInOut{Output}{Output}
\SetKwInOut{Return}{Return}

 \Input{$w'_1,\ldots, w'_n$, randomly permutated clients' models, \\ $S_x$ shadow dataset of target user $x$, \\
 $A(w,D)$,  accuracy of model $w$ on the dataset $D$}

\Output{$w_x$, the model of user $x$}
\caption{$\mathcal{R_M}$\\ 
Reconstruction attack against Model-level shuffling} 
\label{alg:recon_model}
${best} \leftarrow w'_1$\\
\For{each model $w_j$ in $w'_2, \ldots, w'_n$}  
{
    \If{$A(w_j,S_x)>{A(best,S_x)}$}
    {$best \leftarrow w_j $
    }
}
\textbf{Return} $best$
\end{algorithm}

\begin{algorithm}[h]
\SetKwInOut{Input}{Input}
\SetKwInOut{Output}{Output}
\SetKwInOut{Return}{Return}

 \Input{$\mathcal{C}_1, \mathcal{C}_2$, randomly permutated convolutional layers, $\mathcal{FC}_1, \mathcal{FC}_2$, randomly permutated FC layers, \\
$S_x$, shadow dataset of target user $x$, \\
 $A(w,D)$,  accuracy of model $w$ on the dataset $D$}

\Output{$w_x$, a model with $FC_2$ of user $x$ and the average of the rest layers}
\caption{$\mathcal{R_L}$ \\
Reconstruction attack against Layer-level shuffling} 
\label{alg:recon_layer}

\tcp{For FC2 find the one with best accuracy; take the FedAvg for the rest} 

\For{each layer $L_i$ in  $\mathcal{FC}_2$}  
{

$w  \leftarrow CON(FedAvg(\mathcal{C}_1),FedAvg(\mathcal{C} _2)
,FedAvg(\mathcal{FC}_1),L_i)$ \\
Append $w$ in $w_{models}$
}

\textbf{Return} $\mathcal{R_M}(w_{models},S_x,A)$ 
\end{algorithm}

\begin{algorithm}[h]
\SetKwInOut{Input}{Input}
\SetKwInOut{Output}{Output}
\SetKwInOut{Return}{Return}

 \Input{
 $C_1,C_2,FC_1,FC_2$ , layers of the model \\
}
\Output{$w$, a model with the given layers}
\caption{${CON}$ \\ Construct a model with given layers (used in Alg. \ref{alg:recon_layer}) } 
\label{alg:con}

$w_{C1} \leftarrow C_1$\\
$w_{C2} \leftarrow C_2$\\
$w_{FC1} \leftarrow FC_1$\\
$w_{FC2} \leftarrow FC_2$

\textbf{Return} $w$
\end{algorithm}
\begin{algorithm}[H]
\SetKwInOut{Input}{Input}
\SetKwInOut{Output}{Output}
\SetKwInOut{Return}{Return}

 \Input{
 $w_{glob}$ , the global model \\
 $\mathcal{FC}_{2,0},\ldots, \mathcal{FC}_{2,k}$, randomly permutated parameters of the final layer \\
 $S_x$, shadow dataset of target user $x$,\\
  $A(w,D)$,  accuracy of model $w$ on the dataset $D$}
    
\Output{$w_x$, a model with $FC_2$ of user $x$ and the average of the rest layers}
\caption{$\mathcal{R_P}$ \\ Reconstruction attack against Parameter-level shuffling} 
\label{alg:recon_param}


$w_x \leftarrow w_{glob}$

\For{each $i$-th set of parameters $z_i$ in $\mathcal{FC}_{2,i} $}  
{
$w_{best} \leftarrow REP(w_{glob},i,z_{i,0})$\\
$best \leftarrow z_{i,0}$\\
\For{each parameter $j$ in $z_i$}  
{    \If{$A(REP(w_{glob},i,j),S_x)>{A(w_{best},S_x)}$}
    {$w_{best} \leftarrow REP(w_{glob},i,j)$ \\
    $best \leftarrow j$
    }
}
$w_x \leftarrow REP(w_x,i,best)$ 
}
\textbf{Return} $w_x$
\end{algorithm}

\begin{algorithm}[H]
\SetKwInOut{Input}{Input}
\SetKwInOut{Output}{Output}
\SetKwInOut{Return}{Return}
 \Input{
 $w$, a model \\
 $i$, the $i-th$ parameter of $FC_2$\\
 $p$, parameter to replace
}
\Output{$w'$, the output model}
\caption{${REP}$ \\ Replace the $i$-th parameter of $FC_2$ with $p$ (used in Alg. \ref{alg:recon_param})} 
\label{alg:rep}

$w' \leftarrow w$ \\
$w'_{FC_2}[i] \leftarrow p$

\textbf{Return} $w'$
\end{algorithm}

\subsection{Communication Improvement from  Section \ref{sec:def}} \label{app:comm_improv}

If we assume that the shuffler is \emph{Fully Trusted}, following the literature \cite{shuffle_dp_FL}, then Onion Encryption is not necessary to hide the secrets (i.e. parameters) from the MixNet servers. Similarly, Zero-Knowledge Proofs are not required to verify that each server did not tamper with the data. In both cases, since the shuffler is honest, we assume it will follow the protocol and will not leak the secret values to an adversary.
Thus, we can further improve the communication cost by adding a compression step after the unary encoding. 

Run-Length Encoding (RLE) compresses a string by replacing consecutive repeated values with just a single value and a count of its repetitions. For example, AABAA will be represented as (2A,1B,2A).
In our case we have bits; more specifically Unary Encoding (\Cref{def:unary_enc}) creates a bit-string by first placing the ones and then appending the zeroes. Since the size of the bit-string is known (i.e. $r$) we can send only the number of ones. For example, if we have $[1,1,1,0,0,0,0,0,0,0]$ we can just send the number $3$. The shuffler can create a bit-string with $3$ ones and $r-3=7$ zeroes.
\Cref{alg:param_rns_rle} illustrates this idea.

\begin{algorithm}[h]
\caption{\texttt{Parameter-level shuffling with RNS combined with RLE}} 
\label{alg:param_rns_rle}
\KwIn{$n$ clients, precision  $r$, a function $RNS_{enc}(\cdot)$ for RNS encoding and $RNS_{dec}(\cdot)$ for decoding, a function $\mathcal{U}(x,k)$ for unary encoding $x$ in $k$ bits}
\KwOut{$w_{glob}$, the joint model }
\BlankLine
Let $\mathcal{M} = \{m_1,m_2, \ldots , m_u\}$ be pairwise coprime integers where $\prod_{m\in  \mathcal{M}} m < n  (10^r -1)$ \\

\textbf{Client-side (each client $i$ with model $w_i$):} \\
\tcp{Encode each parameter in RNS}
\For{each parameter $p$ in $w_i$}  
{  
    $\{p_1,p_2, \ldots p_u\}   \leftarrow RNS_{enc}( \lfloor p \cdot 10^r \rfloor, \mathcal{M})$
    \newline
    \For{each residue $p_i$ in $\{p_1,p_2, \ldots p_u\}$}  
    {  
        Send $p_i$ to the shuffler. \tcp{Send the number of ones to the shuffler}
    }
}

\textbf{Shuffler:}\\
\For{each parameter $p$ }  
    {      
    $\mathcal{B} \leftarrow \mathcal{U}(p_i,m_i)$ for each RNS moduli $m_i$ and each received residue $p_i$. \tcp{Decompress}
    
    Concatenate all the decompressed received bit vectors $\mathcal{B}$ to a vector $B_{p,j}$ for each residue  $j$.  \\
    Shuffle (bit-wise) each $B_{p,j}$ and send it to the central server. 
    }
\textbf{Server-side:} \\
    Receive the permutated bit vectors $B'_{p}$ for each parameter $p$. \\
     \For{each $i$-th parameter of $w_{glob}$}  
    {                
        \tcp{Sum  and decode the residues to get the average of each parameter}
        $y \leftarrow (\sum B'_{p,0},\sum B'_{p,1}, \ldots, \sum B'_{p,u})_{RNS}$\\
       $w_{glob}[i]\leftarrow (RNS_{dec}(y,  \mathcal{M})/ {10^r})/n$\\

    }

\textbf{Return} $w_{glob}$
\end{algorithm}

\section{Example implementations}
This section discusses an example of the Chinese remainder theorem and gives an example implementation of shuffling with MixNets.
It also includes the figures that are missing from the main-body due to space constraints. 
\Cref{fig:shuffle} shows the different shuffling granularities which we examined in \Cref{sec:recon} and 
\Cref{fig:rns} illustrates a simple example of \Cref{alg:param_rns}.

\begin{figure*}[h]
        \centering
        \includegraphics[width= 0.7 \columnwidth]{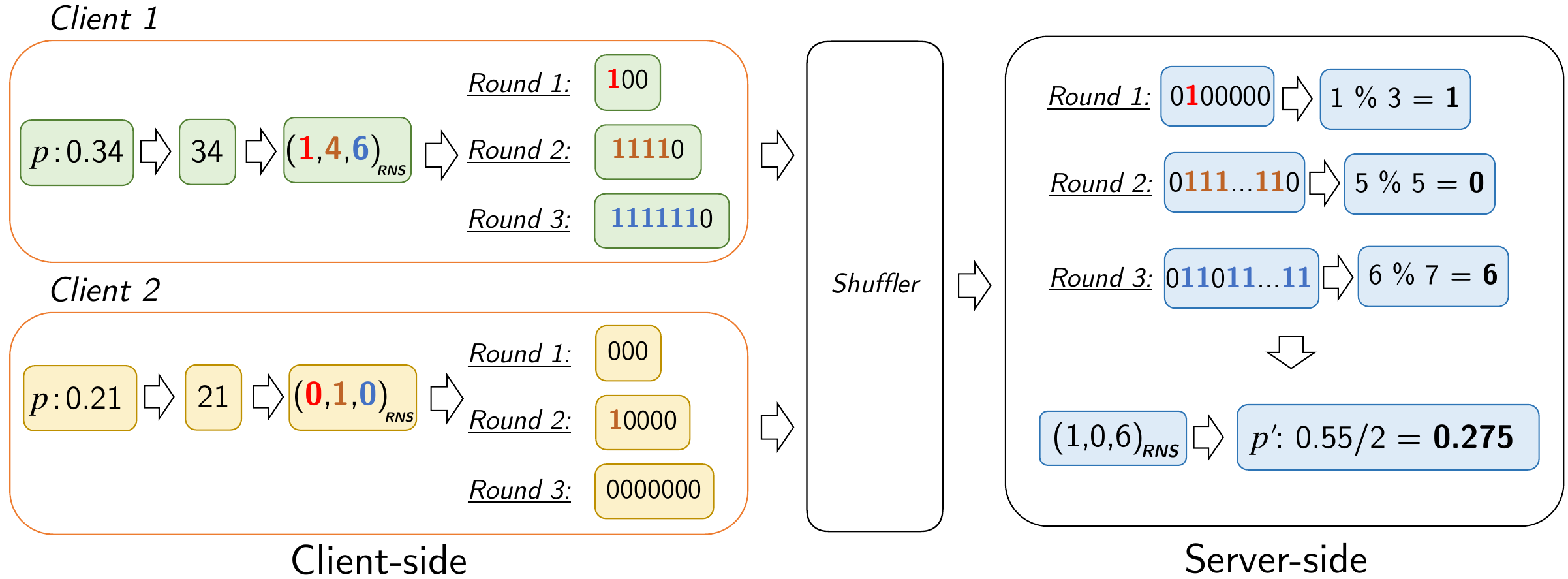} 

    \caption{Example of \Cref{alg:param_rns} with moduli $3,5,7$, for a parameter $p$, where $r=2$ and $n=2$. }
    \label{fig:rns}
\end{figure*}

\subsection{Example of the Chinese Remainder Theorem} \label{chinese_example}

We show how the Chinese remainder theorem can be applied to reconstruct the secret value from the residues, using the example of \Cref{fig:rns}.

Consider the following system of congruences, where  \( 3, 5, \) and \( 7 \) are pairwise coprime:
\[
\begin{cases}
x \equiv 1 \bmod 3 \\
x \equiv 0 \bmod 5 \\
x \equiv 6 \bmod 7
\end{cases}
\]

 We begin by computing the product of the moduli:
\[
M = 3 \times 5 \times 7 = 105.
\]

For each congruence, we define \( M_i = M / m_i \), where \( m_i \) is the modulus of the \( i \)-th equation:
\[
M_1 = 105 / 3 = 35, \quad M_2 = 105 / 5 = 21, \quad M_3 = 105 / 7 = 15.
\]

Next, we compute the modular inverse \( y_i \) such that \( M_i \cdot y_i \equiv 1 \bmod m_i \):
\[
35 \cdot y_1 \equiv 1 \bmod 3 \Rightarrow y_1 = 2, 
\]
\[
21 \cdot  y_2  \equiv 1 \bmod 5 \Rightarrow y_2 = 1
\]
\[
15 \cdot  y_3 \equiv 1 \bmod 7 \Rightarrow y_3 = 1.
\]

We now apply the Chinese remainder theorem formula:
\[
x \equiv a_1 M_1 y_1 + a_2 M_2 y_2 + a_3 M_3 y_3 \bmod M,
\]
where \( a_1 = 1, a_2 = 0, a_3 = 6 \). 

By substituting the values we have:
\[
x \equiv 1 \cdot 35 \cdot 2 + 0 \cdot 21 \cdot 1 + 6 \cdot 15 \cdot 1 = 70 + 0 + 90 = 160 \bmod 105.
\]

Thus, the solution is:
\[
x \equiv 55 \bmod 105.
\]

Observe that \( x = 55 \) indeed satisfies all original congruences:
\[
55 \bmod 3 = 1, \quad 55 \bmod 5 = 0, \quad 55 \bmod 7 = 6.
\]

\subsection{Shuffling} \label{app:shuffle}
\begin{figure}[h]
    \centering
    \includegraphics[width=0.45 \textwidth]{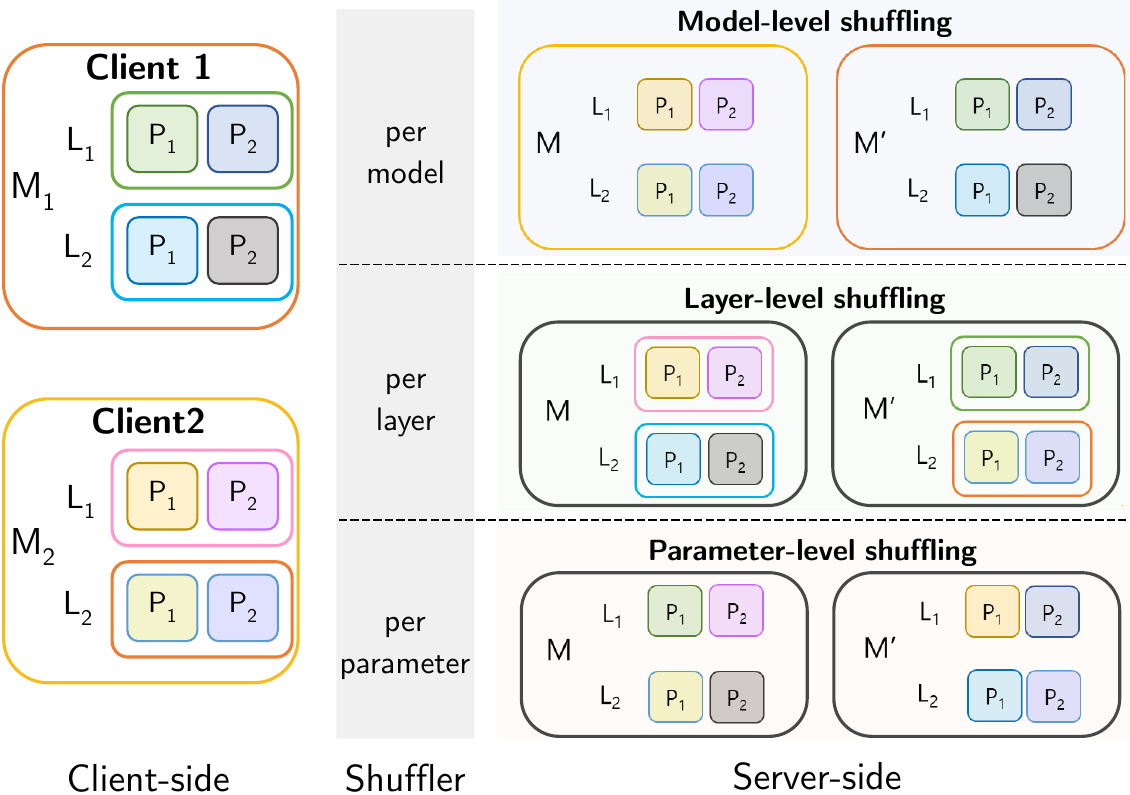} 
    \caption{Shuffling methods in FL with different granularity}
    \label{fig:shuffle}
\end{figure}

\Cref{alg:mixnet} uses Onion Encryption \cite{tor} to ensure that no server can view the messages and
Zero-Knowledge Proofs (ZKPs) \cite{zkps} to assert that no server can change the messages. Each server decrypts the outermost layer of onion encryption, verifies that no tampering has occurred with a ZKP, shuffles and sends the shuffled vector to the next server. 
Since an honest shuffling (i.e. the fact that the previous server chose a uniformly random permutation) cannot be proved via ZKPs, at least one server of the MixNet needs to be trusted. 

In scenarios where the clients cannot run their own MixNet server, they can instead choose more servers from the network to shuffle. For instance, they can decide to always select 100 servers out of all the available ones (a different set of servers for each parameter), meaning that the adversary must control all 100 servers to compromise the defense for each parameter. However, this requires an omnipotent adversary that controls a significant part of the MixNet network (even hacking honest servers like those run by the honest clients). This attacker is arguably unrealistic; often, anonymity systems, like Tor \cite{tor}, assume that an attacker can only control parts of the network. 

\begin{algorithm}[h]
\SetKwInOut{Input}{Input}
\SetKwInOut{Output}{Output}
\SetKwInOut{Return}{Return}

\Input{$M_i$: Message client $i$ wishes to send, $n$ number of clients, $P$: List of participating mix servers, Onion$_{\text{Enc}}(M, \mathsf{pk})$ and Onion$_{\text{Dec}}(M, \mathsf{sk})$ for onion encryption/decryption}
\Output{Shuffled messages $M'$}
\caption{Shuffling with MixNet}
\label{alg:mixnet}

\tcp{client-side}

\For{each client $i$ in parallel}{
$C_i \gets M_i$\\
\tcp{onion-encryption}
\For{each mix server $j$ in reverse order of $P$}{
$C_i \gets \text{Onion}_{\text{Enc}}(C_i, \mathsf{pk}_j)$
}
Send $C_i$ to $P_1$
}

\tcp{MixNet-side}
\For{each mix server $j$ in $P$}{
\tcp{First server receives the messages (no ZKP to verify)}
\If{$j=1$}{
$\mathbf{C} = [C_i]_{i=1}^n$\\
$\pi \gets \text{None}$
}

\tcp{Last server outputs the messages}
\If{$j=|P|$}{
\textbf{Return} MixVerify($\mathbf{C}$, $j$, $\pi$).
}

$\mathbf{C}, \pi \gets$ MixVerify($\mathbf{C}$, $j$, $\pi$)

}

\end{algorithm}

\begin{algorithm}[h]
\SetKwInOut{Input}{Input}
\SetKwInOut{Output}{Output}
\SetKwInOut{Return}{Return}

\Input{$\textbf{C}$: Message from previous server, $j$: Current mix server, $\pi_{j-1}$ ZKP from previous server,
Verify$_\text{{ZKP}}(C,\pi)$ and Generate$_\text{ZKP}(\textbf{C},\textbf{C}')$ to verify and generate ZKPs.
}
\Output{Partially decrypted and shuffled message $\textbf{C}'$}
\caption{MixVerify($C$,$j$,$\pi$)}

$M \gets \text{Onion}_{\text{Dec}}(\textbf{C}, \mathsf{sk_j)}$\\
\If{$j>1$ and $\neg$ \text{Verify}$_\text{{ZKP}}(M,\pi_{j-1})$ }{
    \textbf{reject and abort}
}

$M' \gets \text{Shuffle}(M)$\\
$\pi_j \gets  \text{Generate}_\text{ZKP}(M,M')$

\textbf{Return} $M', \pi_j$
\end{algorithm}

\section{Models and hyperparameters} \label{app:models}
This section describes the models which we used in the experiments of \Cref{sec:eval}. For MNIST and CIFAR-10 we used exactly the same models as in \cite{sia_paper, sia_extended}  \footnote{ \url{https://github.com/HongshengHu/SIAs-Beyond_MIAs_in_Federated_Learning}}, as we used their code for evaluating the SIAs. For CIFAR-100 we used the ResNet-18, in order to illustrate that the proposed mechanism works even on more complex models.

\subsection{MNIST}
We use a CNN which consists of two convolutional layers with 5×5 kernels, followed by two max-pooling layers, and three fully connected layers. Specifically, the architecture is as follows:

\begin{itemize}
    \item \textit{Conv2D layer}: 32 filters with a kernel size of \(5 \times 5\). Activation: \textit{ReLU}. Input shape: \textit{(1, 28, 28)}.
    \item \textit{MaxPooling2D((2, 2))}.
    \item \textit{Conv2D layer}: 64 filters with a kernel size of \(5 \times 5\). Activation: \textit{ReLU}.
    \item \textit{MaxPooling2D((2, 2))}.
    \item \textit{Flatten layer}: Flattens the feature maps into a 1D tensor of size \(64 \times 4 \times 4\).
    \item \textit{Dense layer}: 512 neurons. Activation: \textit{ReLU}.
    \item \textit{Dense layer}: 128 neurons. Activation: \textit{ReLU}.
    \item \textit{Output: Dense layer}: \textit 10 neurons. Activation: \textit{none (logits)}.
\end{itemize}

\subsection{CIFAR-10}

The CNN consists of two convolutional layers with 5×5 kernels (the first convolutional layer takes input images with three color channels (RGB)), followed by two max-pooling layers, and three fully connected layers. Specifically, the architecture is as follows:

\begin{itemize}
    \item \textit{Conv2D layer}: 32 filters with a kernel size of \(5 \times 5\). Activation: \textit{ReLU}. Input shape: \textit{(3, 32, 32)}.
    \item \textit{MaxPooling2D((2, 2))}.
    \item \textit{Conv2D layer}: 64 filters with a kernel size of \(5 \times 5\). Activation: \textit{ReLU}.
    \item \textit{MaxPooling2D((2, 2))}.
    \item \textit{Flatten layer}: Flattens the feature maps into a 1D tensor of size \(64 \times 5 \times 5\).
    \item \textit{Dense layer}: 512 neurons. Activation: \textit{ReLU}.
    \item \textit{Dense layer}: 128 neurons. Activation: \textit{ReLU}.
    \item \textit{Output: Dense layer}: \textit 10 neurons. Activation: \textit{none (logits)}.
\end{itemize}

\subsection{CIFAR-100}
We use a standard ResNet-18 architecture \cite{he2016deep} for CIFAR-100. Specifically, we use the implementation provided by the PyTorch library, initialized without pretrained weights (pretrained=False). The architecture is as follows:

\begin{itemize}
    \item \textit{Conv2D layer}: 64 filters with a kernel size of \(7 \times 7\), stride 2, padding 3. Activation: \textit{ReLU}. Input shape: \textit{(3, 224, 224)}.
    \item \textit{MaxPooling2D((3, 3))}, stride 2.
    \item \textit{Residual Block (x2)}: Two convolutional layers with 64 filters, kernel size \(3 \times 3\), batch normalization, and skip connections.
    \item \textit{Residual Block (x2)}: Two convolutional layers with 128 filters. The first block includes downsampling via stride 2.
    \item \textit{Residual Block (x2)}: Two convolutional layers with 256 filters. Includes downsampling.
    \item \textit{Residual Block (x2)}: Two convolutional layers with 512 filters. Includes downsampling.
    \item \textit{Global Average Pooling}: Output size reduced to a 512-dimensional vector.
    \item \textit{Fully Connected Layer}: Output size 100 (number of classes). Activation: \textit{none (logits)}.
\end{itemize}

\subsection{Hyperparameters}
For training, we used stochastic gradient descent (SGD) as the optimizer with a learning rate of 0.01 and a momentum of 0.9. Each client trained the model using a local batch size of 64, while the testing batch size was set to 128. These hyperparameters were selected following standard practice in FL literature and remained fixed across all experiments.

\section{Resources} \label{app:resources}

To compute each point in \Cref{exp:rec}, \Cref{alg:recon_model} and \Cref{alg:recon_layer} take approximately 1/2/7 hours for the MNIST/CIFAR-10/CIFAR-100 datasets, respectively.
For the parameter reconstruction attack of \Cref{alg:recon_param}, the time required is approximately 20/40/70 hours for MNIST/CIFAR-10/CIFAR-100, respectively.
Our proposed method (\Cref{alg:param_rns}) takes around 0.5/1/3 hours for MNIST/CIFAR-10/CIFAR-100, respectively, with nearly all of that time spent on training. We limit training to the first 10 epochs to reduce unnecessary computational cost, as SIA accuracy tends to be higher in early rounds when client models are more diverse. The experiments were conducted on a server equipped with two NVIDIA RTX 6000 GPUs (24 GB each), two AMD EPYC 7302 16-core processors, 512 GB of RAM, and 10 Gbps Ethernet connectivity.

The other experiments that do not necessitate the use of a GPU (i.e. \Cref{table:comp_cost} and \Cref{fig:comm_cost}) were computed on a Mac M1 Laptop with a Apple Mac M1 Max chip and 64 GB  of memory. 

\section{Additional Evaluation} \label{app_recon_more_exps}

\subsection{Other defenses against SIAs}

\cite{sia_extended} discusses that regulaziration-based defences are insufficient against SIAs. In this section we verify this claim by performing experiments on MNIST and CIFAR10 datasets with 10 clients and $\alpha = 0.1$ (the level of heterogeneity). \Cref{tbl:regul-def} shows that all standard regularization approaches fail to address SIAs; rather, they may make the model even slightly more susceptible to them, as they might increase the distributional differences between clients.

Furthermore, we also test Instahide \cite{instahide}, which was designed for Data Reconstruction Attacks (DRAs). Instahide works by having each client blend their images with some $k$ (a parameter of the mechanism) random, publicly accessible ones. This forms a layer of obfuscation which prohibits the adversary from reconstructing the images. \Cref{tbl:instahide_acc} shows that this defense is ineffective against SIAs. Instahide focuses on preventing the adversary from reconstructing a particular image, but SIAs consider that the adversary does know that the image exists in the dataset. If the images are blended, with Instahide, then this means that the adversary already knows the blended image. Since this blend is not happening between clients (e.g. shuffling their images) but between each client and a public dataset, the distributional differences are not affected. Note that $k$ in Instahide is small, meaning that only a small portion of the image is altered, to retain model accuracy.

\begin{table}[h]
\centering
\caption{Regularization-based defenses against SIAs with 10 clients}
\label{tbl:regul-def}
\small
\renewcommand{\arraystretch}{1.2}
\begin{tabular}{|l|l|c|c|>{\columncolor{gray!15}}c|c|>{\columncolor{gray!15}}c|}
\hline
\textbf{Experiment} & \textbf{Parameter} & \textbf{Dataset} & \textbf{Model Acc.} & \makecell{\textbf{Model Acc.}\\\textbf{(vanilla FL)}} & \textbf{SIA Acc.} & \makecell{\textbf{SIA Acc.}\\\textbf{(vanilla FL)}} \\
\hline
\multirow{6}{*}{Random cropping} 
& Crop=24 & MNIST   & 92.56 & 97.22 & 50.24 & 45.01 \\
& Crop=20 & MNIST   & 83.30 & 97.22 & 53.17 & 45.01 \\
& Crop=16 & MNIST   & 62.15 & 97.22 & 56.10 & 45.01 \\
\cline{2-7}
& Crop=28 & CIFAR10 & 58.78 & 61.07 & 52.50 & 51.57 \\
& Crop=24 & CIFAR10 & 53.66 & 61.07 & 53.07 & 51.57 \\
& Crop=20 & CIFAR10 & 44.81 & 61.07 & 59.20 & 51.57 \\
\hline
\multirow{6}{*}{Dropout} 
& p=0.2 & MNIST   & 96.45 & 97.22 & 55.70 & 45.01 \\
& p=0.5 & MNIST   & 96.28 & 97.22 & 50.92 & 45.01 \\
& p=0.8 & MNIST   & 96.42 & 97.22 & 45.17 & 45.01 \\
\cline{2-7}
& p=0.2 & CIFAR10 & 60.76 & 61.07 & 57.77 & 51.57 \\
& p=0.5 & CIFAR10 & 60.38 & 61.07 & 55.20 & 51.57 \\
& p=0.8 & CIFAR10 & 60.52 & 61.07 & 54.54 & 51.57 \\
\hline
\multirow{6}{*}{Weight decay} 
& e=0.5 & MNIST   & 97.34 & 97.22 & 50.94 & 45.01 \\
& e=0.4 & MNIST   & 97.32 & 97.22 & 52.87 & 45.01 \\
& e=0.3 & MNIST   & 97.00 & 97.22 & 55.27 & 45.01 \\
\cline{2-7}
& e=0.5 & CIFAR10 & 61.01 & 61.07 & 56.89 & 51.57 \\
& e=0.4 & CIFAR10 & 61.02 & 61.07 & 57.70 & 51.57 \\
& e=0.3 & CIFAR10 & 60.59 & 61.07 & 61.97 & 51.57 \\
\hline
\end{tabular}
\end{table}

\begin{table}[h]
\centering
\caption{Instahide against SIAs with $\alpha = 0.1$, 10 clients on the MNIST dataset. 
SIA accuracy on vanilla FL is $45.01\%$ and vanilla model accuracy is $97.22\%$}
\label{tbl:instahide_acc}
\begin{tabular}{c c c}
\hline
$k$ & Model Accuracy & SIA Accuracy \\
\hline
1 & 97.48 & 50.3 \\
2 & 86.65 & 49.9 \\
3 & 81.52 & 50.9 \\
4 & 78.65 & 50.45 \\
5 & 78.37 & 52.71 \\
\hline
\end{tabular}
\end{table}

\subsection{Protection from SIAs} \label{app:prot_sias}

\subsubsection{Different Hyperparameters}
\Cref{exp:rec_with_number_of_clients} shows the success rate of SIA for \Cref{alg:param_rns} and the reconstruction Algorithms \ref{alg:recon_model}, \ref{alg:recon_layer} and \ref{alg:recon_param} with varying number of clients when $\alpha$ is fixed to $0.1$ with $10$ local epochs. We observe that the success rate depends on the number of clients which clarifies the reason why SIAs are more of a threat in the cross-silo setting. In all cases, model-level and layer-level shuffling offer insufficient protection, while parameter-level shuffling performs better but the SIA accuracy is still better than random guessing. In contrast our proposed method of \Cref{alg:param_rns} offers robust protection.

Moreover, we conduct an experiment  where we vary the number of local epochs (\Cref{fig:le}), which does not yield notably different results because of the small $\alpha$ (i.e. datasets are already easily distinguishable).

\begin{figure*}[h]
    \centering
    \begin{subfigure}[t]{0.29\textwidth}
        \centering
        \includegraphics[width=\linewidth]{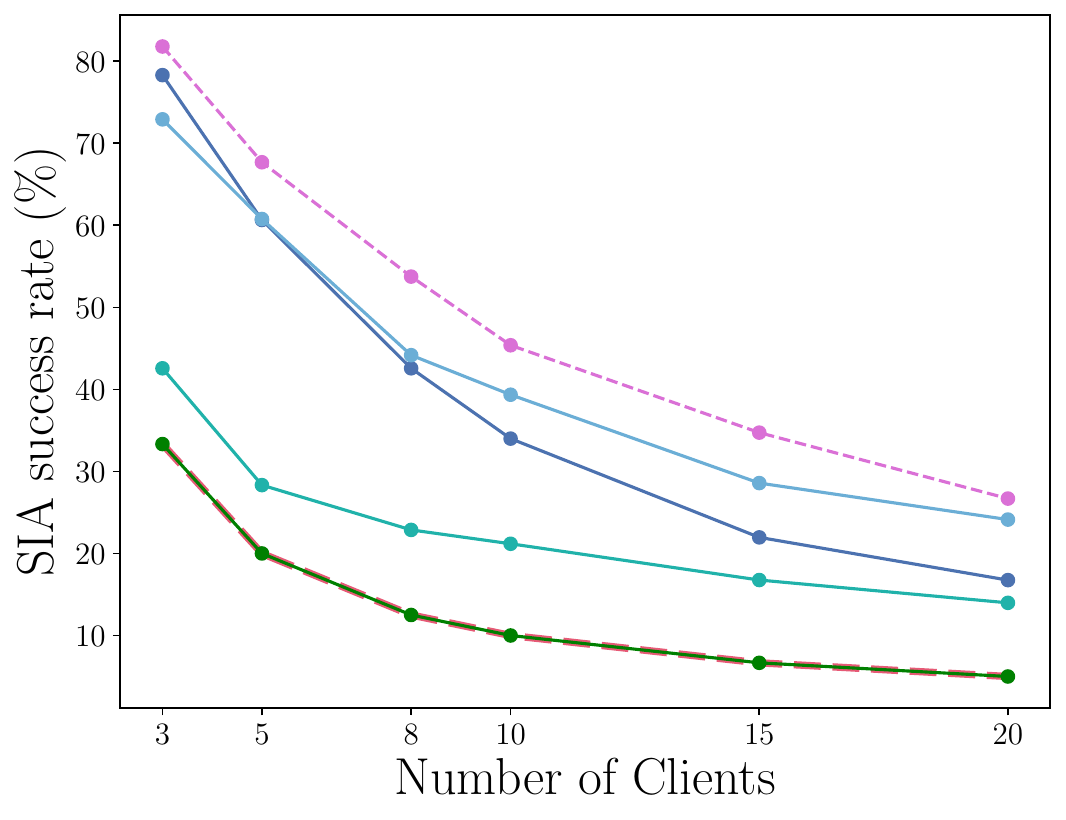} 
        \caption{MNIST/CNN}
    \end{subfigure}%
    \hspace{0.05\textwidth} 
    \begin{subfigure}[t]{0.29\textwidth}
        \centering
        \includegraphics[width=\linewidth]{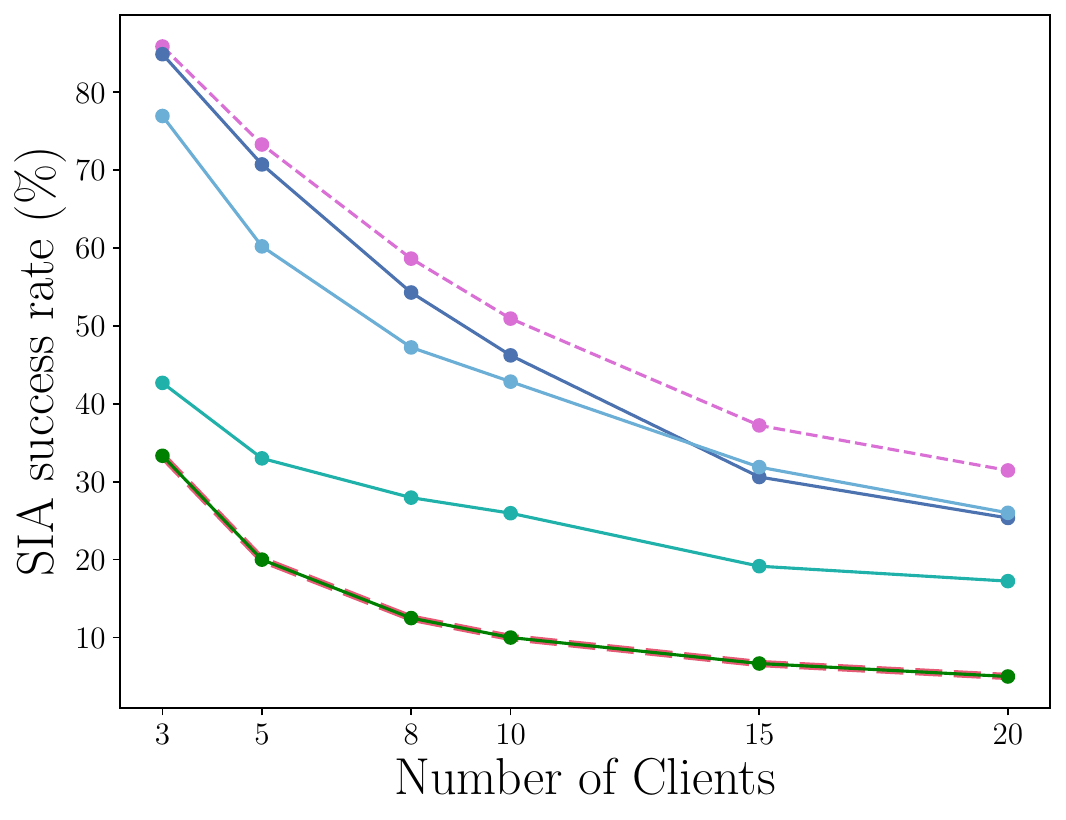} 
        \caption{CIFAR-10/CNN}
    \end{subfigure}%
    \hspace{0.05\textwidth} 
    \begin{subfigure}[t]{0.29\textwidth}
        \centering
        \includegraphics[width=\linewidth]{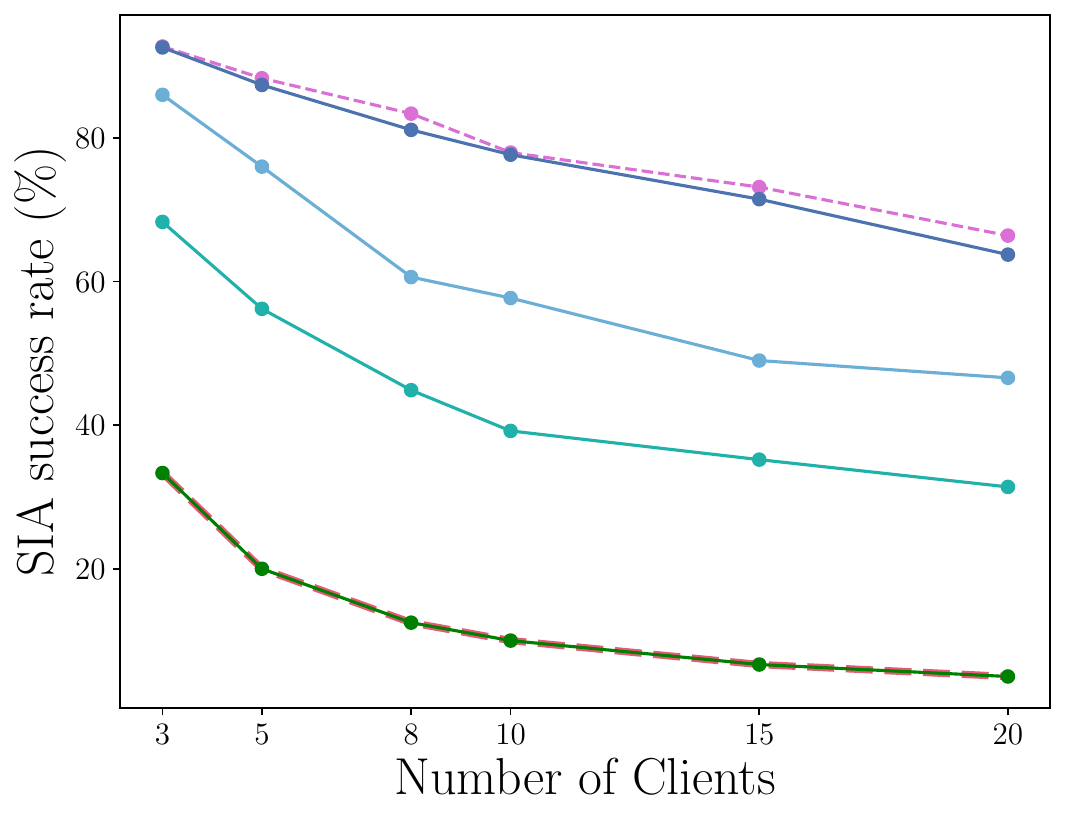} 
        \caption{CIFAR-100/ResNet}
    \end{subfigure}

    \vspace{0.5em}
    \includegraphics[width=\textwidth]{figures/shared_legend.pdf} 
    \caption{Success rate of SIA for varying number of clients when $\alpha=0.1$ with 10 local epochs}
    \label{exp:rec_with_number_of_clients}
\end{figure*}

\begin{figure}[h]
    \centering
    \begin{subfigure}[h]{0.3\columnwidth}
        \centering
        \includegraphics[width=\columnwidth]{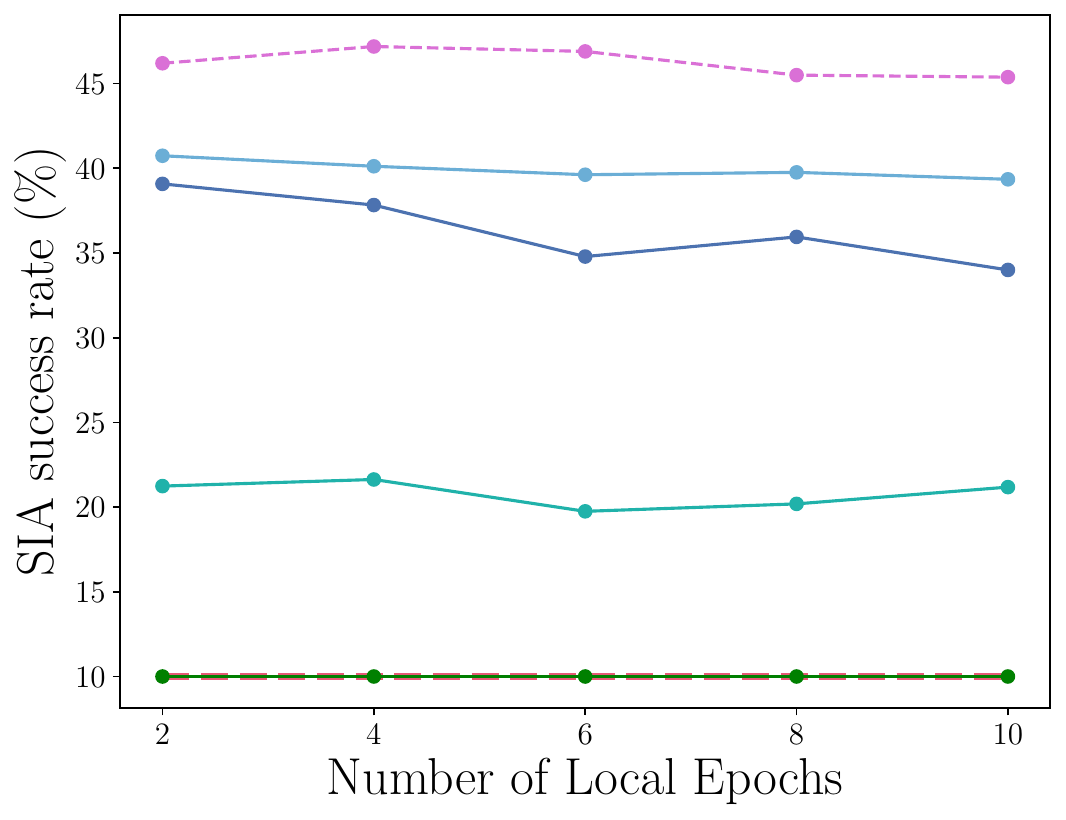} 
        \caption{MNIST}
        \label{fig:mnist_group}
    \end{subfigure}
    \begin{subfigure}[h]{0.3\columnwidth}
        \centering
        \includegraphics[width=\columnwidth]{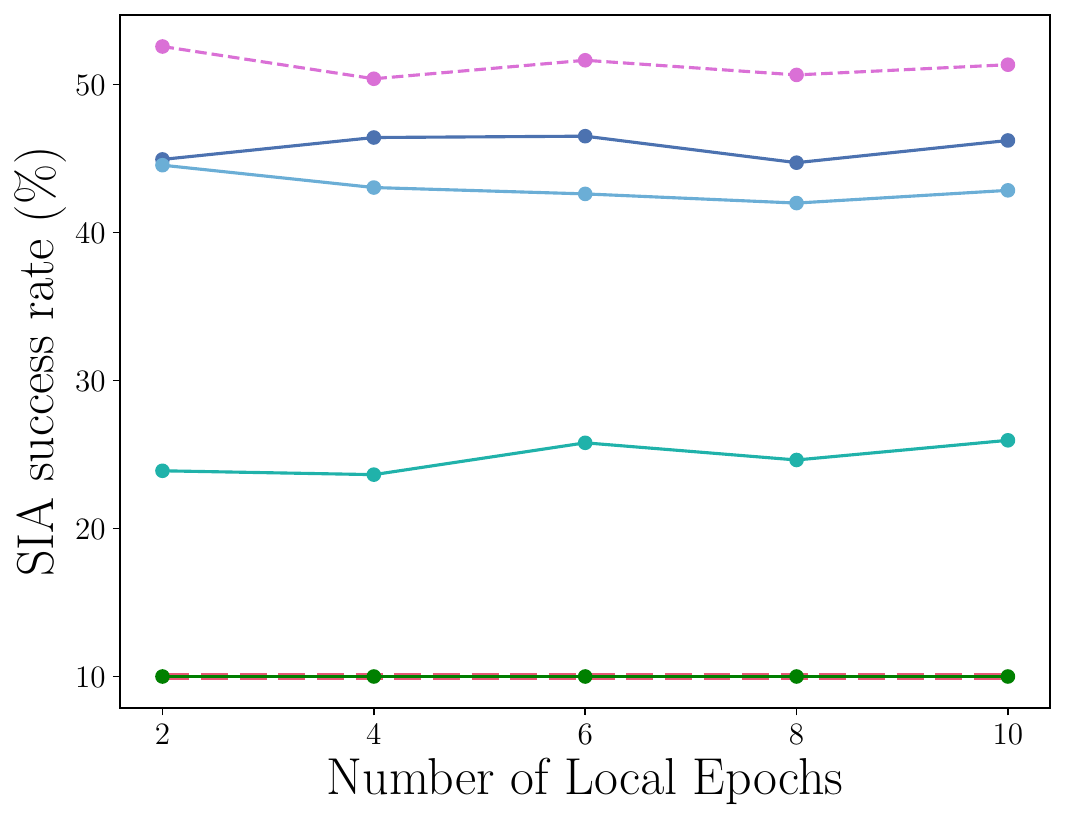} 
        \caption{CIFAR-10}
        \label{fig:cifar_group}
    \end{subfigure}
    
    \vspace{1em} 
    \includegraphics[width= \columnwidth]{figures/shared_legend.pdf}
    
    \caption{SIA success rate across different number of local epochs with 10 clients and $\alpha =0.1$}
    \label{fig:le}
\end{figure}

Finally, we examine how the accuracy of the Reconstruction Attacks of \Cref{sec:recon} is affected when the adversary has only an ''approximate'' shadow dataset. In this setting, instead of exactly knowing a few data points, the adversary knows a noisy version of them. We run experiments on CIFAR-10 ($\alpha = 0.1$, 10 clients, 10 global epochs, 10 local epochs) using two noise types: geometric noise (\Cref{table:sia_geom_noise}) and color noise (\Cref{table:sia_color_noise}).
Both tables show that standard shuffling techniques fail to guarantee perfect protection (i.e., equal to random guessing, which is $10\%$ in this case). In contrast, our proposed approach consistently provides robust protection, as our theoretical guarantees are independent of the adversary’s side knowledge (\Cref{prop:perfect_protection}).

\begin{table}[h!]
  \caption{SIA success rate (\%) on a shadow dataset with  geometrical noise}
  \label{table:sia_geom_noise}
  \centering
  \begin{tabular}{lcccc}
    \toprule
    \textbf{Method} &
    \textbf{Original} &
    \textbf{Good (30\textdegree{} rotation)} &
    \textbf{Mid (H-Flip)} &
    \textbf{Bad (V-Flip)} \\
    \midrule
    Model-level shuffling (Alg. \ref{alg:recon_model}) & 45 & 40.6 & 32 & 26 \\
    Layer-level shuffling (Alg. \ref{alg:recon_layer}) & 42 & 33 & 27.3 & 25 \\
    Parameter-level shuffling (Alg. \ref{alg:recon_param}) & 27 & 24.1 & 21.7 & 19.3 \\
    Proposed Solution (Alg. \ref{alg:param_rns}) & 10 & 10 & 10 & 10 \\
    \bottomrule
  \end{tabular}
\end{table}

\begin{table}[h!]
  \caption{SIA success rate (\%) on a shadow dataset with  color noise (brightness = contrast = saturation = $x$, hue = $x$/10)}
  \label{table:sia_color_noise}
  \centering
  \begin{tabular}{lcccc}
    \toprule
    \textbf{Method} &
    \textbf{Original} &
    \textbf{Good ($x=0.2$)} &
    \textbf{Mid ($x=0.4$)} &
    \textbf{Bad ($x=0.6$)} \\
    \midrule
    Model-level shuffling (Alg. \ref{alg:recon_model}) & 45 & 31.9 & 22 & 18 \\
    Layer-level shuffling (Alg. \ref{alg:recon_layer}) & 42 & 39 & 18 & 16 \\
    Parameter-level shuffling (Alg. \ref{alg:recon_param}) & 27 & 19 & 14 & 12 \\
    Proposed Solution (Alg. \ref{alg:param_rns}) & 10 & 10 & 10 & 10 \\
    \bottomrule
  \end{tabular}
\end{table}

In the experiments of the main body we assumed that the adversary holds a shadow dataset that is $5\%$ of the one actually used. \Cref{table:acc_smaller_db} shows that even when this percentage is smaller, the adversary still has an accuracy greater than random guessing (which is $10\%$).

\begin{table}[h!]
  \caption{SIA success rate for a smaller shadow dataset (CIFAR-10/CNN, $n=10$, $\alpha=0.1$)}
  \label{table:acc_smaller_db}
  \centering
  \begin{tabular}{lll}
    \toprule
   \textbf{Method}  &  \textbf{0.5\% shadow db }&   \textbf{1\% shadow db} \\
    \midrule
     Vanilla FL & 51.57 & 51.57 \\
    Model-level shuffling (Alg. \ref{alg:recon_model}) & 21.2 & 25.4\\
     Layer-level shuffling (Alg. \ref{alg:recon_layer}) & 20.24 & 23.5   \\
    Parameter-level shuffling (Alg. \ref{alg:recon_param}) 	 & 16.2 & 19.1 \\
     Proposed Solution (Alg. \ref{alg:param_rns}) 	 & 10 & 10 \\
    \bottomrule
  \end{tabular}
\end{table}

\subsubsection{Different Models} \label{app_diff_complex_models}
We conduct experiments the Synthetic tabular dataset of \cite{sia_extended}, using the same MLP architecture as \cite{sia_extended} for a fair comparison. \Cref{tab:sia_defense_results} shows results consistent with the previous experiments, supporting the generality of SIAs. We observe that model-level, layer-level and parameter-level shuffling reduce the success rate of SIAs but fail to completely mitigate the attack. 
However, the proposed \Cref{alg:param_rns}
shows the same SIA accuracy as random guessing.

To evaluate the proposed method across different data modalities, we conducted additional experiments on a time-series dataset (UCI HAR dataset \cite{ReyesOrtizAGOP12}) and a text dataset (the 20 Newsgroups dataset \cite{Newsgroups20}) in addition to synthetic data. For the UCI HAR dataset, we employ a 1D CNN, and for the 20 Newsgroups dataset, we use  DistilBERT \cite{sanh2019distilbert}. The experimental results are summarized in \Cref{tab:different_model}. In the Synthetic and HAR datasets, $r = 4$ can achieve the baseline accuracy. For the 20 Newsgroups dataset, the model accuracy is close to the accuracy of Vanilla FL when we set  $r=3$ and almost matches it when $r=4$.

Our experimental results show that reducing the model parameters with $r = 4$ is sufficient to achieve a balance between model accuracy and protection against SIAs. Further compression beyond this point provides marginal accuracy benefit while incurring greater communication cost. In conclusion, $r=4$ preserves accuracy close to that of the vanilla model across all evaluated datasets (MNIST, CIFAR10, CIFAR100, Synthetic, HAR, and 20 Newsgroups), while significantly reducing the success rate of SIA.

We would also like to discuss an alternative approach for LLMs. Instead of scaling by $10^r$ (as in \Cref{alg:param_rns}), one could perform integer quantization. Notably, recent work has shown that each parameter of BERT can be represented as an 8-bit integer with negligible accuracy loss~\cite{kim2021bert}. When such 8-bit quantization is combined with our RNS-based encoding, the resulting co-prime moduli introduce only a $1.28\times$ expansion factor over Vanilla FL's 32-bit binary encoding, while still enabling a full depiction of the quantized model parameters. In other words, the combination of integer quantization with \Cref{alg:param_rns} achieves $1.28\times$ expansion with exactly the same model accuracy of ~\cite{kim2021bert} (i.e. negligibly less than vanilla FL). 
With this combination, the SIA accuracy in BERT decreases to $10\%$ (i.e. equivalent to random guessing) compared to the $69.8\%$ SIA success rate of Vanilla FL.

\begin{table}[h]
\centering
\caption{SIA success rate on the synthetic dataset of \cite{sia_paper} with MLP}
\begin{tabular}{l c}
\toprule
\textbf{Method} & \textbf{Success Rate (\%)} \\
\midrule
Vanilla FL & 46.2 \\
Model-level shuffling (Alg. \ref{alg:recon_model}) & 39.3 \\
Layer-level shuffling (Alg. \ref{alg:recon_layer} ) & 37.0 \\
Parameter-level shuffling (Alg. \ref{alg:recon_param} ) & 21.1 \\
    Proposed Solution (Alg. \ref{alg:param_rns}) 	 & 10.0 \\
\bottomrule
\end{tabular}
\label{tab:sia_defense_results}
\end{table}

\begin{table}[h]
\centering
\caption{Model Accuracy (top-1) of Alg. \ref{alg:param_rns} with precision $r$ compared to vanilla FL (10 clients) on Synthetic, HAR, and 20 Newsgroups datasets.}
\begin{tabular}{cccccccc}
\toprule
\multirow{2}{*}{\textbf{Dataset}} 
& \multicolumn{5}{c}{\textbf{Model Accuracy}} 
& \multicolumn{2}{c}{\textbf{SIA Accuracy}} \\
\cmidrule(lr){2-6} \cmidrule(lr){7-8}
 & Vanilla FL & $r=1$ & $r=2$ & $r=3$ & $r=4$ & Vanilla FL & Alg. \ref{alg:param_rns} \\
\midrule
Synthetic & 80.48 & 65.19 & 75.96 & 76.07 & 80.30 & 54.30 & 10 \\
HAR (time-series) & 90.38 & 16.83 & 53.58 & 87.72 & 89.28 & 55.60 & 10 \\
20 Newsgroups (text) & 67.90 & 3.03 & 5.16 & 64.00 & 66.90 & 69.80 & 10 \\
\bottomrule
\end{tabular}
\label{tab:different_model}
\end{table}

\subsection{Protection against Data Reconstruction Attacks} \label{app:dras}

In this section we discuss how our proposed method can effectively defend against DRA. One of the earliest works on DRA, Deep Leakage from Gradients (DLG) \cite{zhu2019deep}, assumes that each client uses a very small batch size for local training (at most 8 samples). However, in our approach, all client gradients (or model parameters) are mixed before being transmitted to the server (using \Cref{alg:param_rns}). Thus, even if each client employs a batch size of 1, the aggregated gradient reflects $n$ data points, where $n$ is the number of clients. 

We conducted experiments against the DLG attack of \cite{zhu2019deep}, which show that our approach can substantially reduce its effectiveness. Specifically, while the reconstruction loss of the original attack is 0.0003, it increases to 0.98 when \Cref{alg:param_rns} is applied (even with as few as five clients). Thus the reconstructed images under \Cref{alg:param_rns} become severely degraded, with no recognizable features \Cref{fig:dlg}. Other DRA techniques are capable of operating with larger batch sizes (e.g., \cite{yin2021see} employs up to 48 samples per client). However, \cite{yin2021see} also shows that the success rate of the attack decreases as the batch size increases. This suggests that our method, by effectively conveying larger batch sizes, can further help mitigate DRA.

\begin{figure}[htbp]
    \centering

    \begin{subfigure}{1.0\linewidth}
        \centering
        \includegraphics[width=\linewidth]{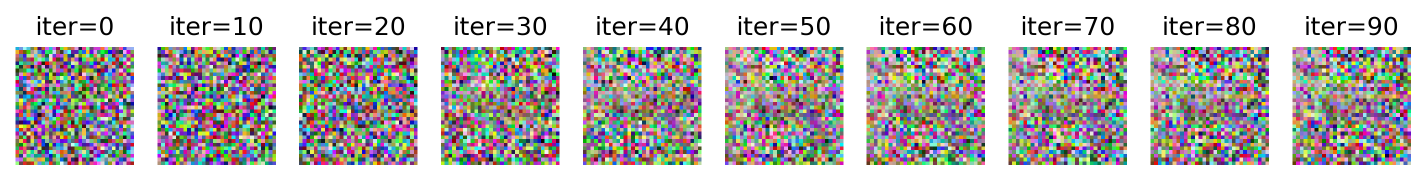}
        \caption{2 clients }
        \label{fig:sub1}
    \end{subfigure}
    
    \vspace{1em}  
    \begin{subfigure}{1.0\linewidth}
        \centering
        \includegraphics[width=\linewidth]{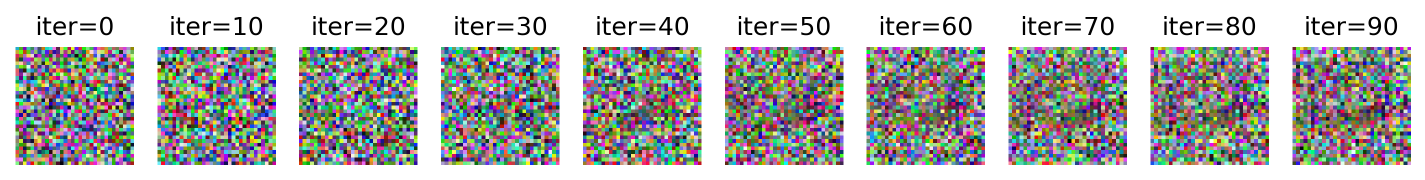}
        \caption{10 clients}
        \label{fig:sub2}
    \end{subfigure}
    
    \caption{DLG attack on \Cref{alg:param_rns} (CIFAR-10 with batch size 1)}
    \label{fig:dlg}
\end{figure}

\subsection{Other aggregation functions} \label{app:other_agg_fun}

Federated Proximal (FedProx) \cite{li2020federated} adds a proximal term as a regularization penalty during local model optimization. The global model is then computed as the average of the local models. In Federated Stochastic Gradient Descent (FedSGD) \cite{fl_mother_paper} clients compute gradients over their data just once per round. The central server collects all local models and averages them.

\Cref{app:exp_more_agg_func} shows that for both cases SIAs have higher success rate than random guessing for all standard shuffling approaches. In contrast, the proposed framework reduces the accuracy of the attacks to random guessing. This indicates that our approach can be extended to other sum-based aggregation functions. \Cref{app:exp_more_agg_func_ACC} shows that $r=3$ suffices to reach an accuracy comparable to vanilla FL and $r=5$ to match it. 

\begin{figure*}[h]
    \centering
    \begin{subfigure}[t]{0.4\textwidth}
        \centering
        \includegraphics[width=\linewidth]{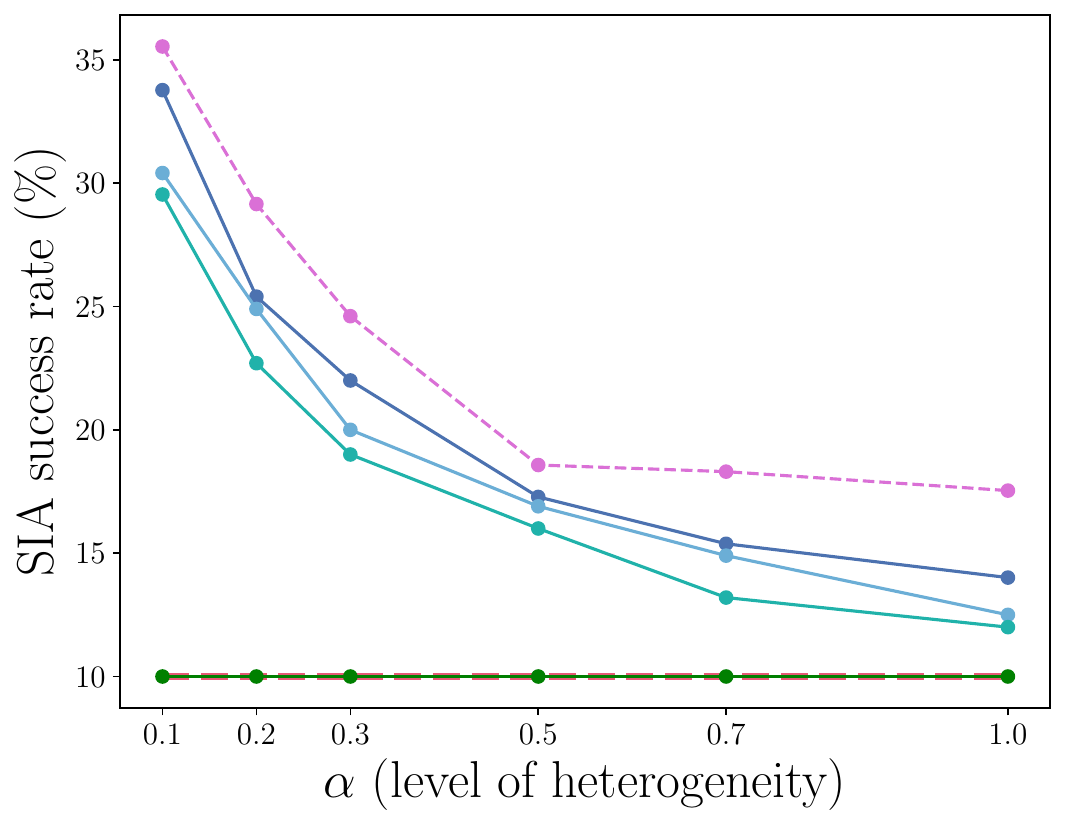} 
        \caption{FedProx (CIFAR10/CNN)}
    \end{subfigure}%
    \begin{subfigure}[t]{0.4\textwidth}
        \centering
        \includegraphics[width=\linewidth]{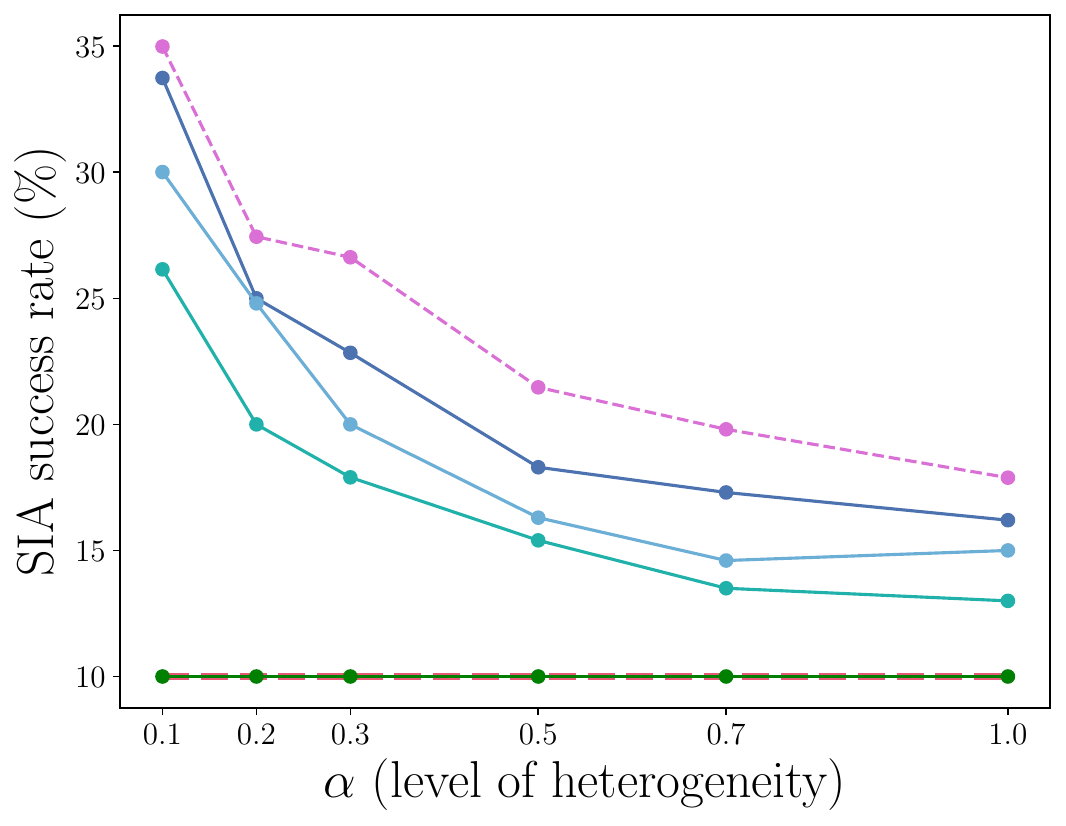} 
        \caption{FedSGD (CIFAR10/CNN)}
    \end{subfigure}%
    \hspace{0.03\textwidth} 
    \includegraphics[width=\textwidth]{figures/shared_legend.pdf} 
    \caption{Success rate of SIA for $10$ clients on FedProx and FedSGD}
    \label{app:exp_more_agg_func}
\end{figure*}

\begin{figure}[h]
    \centering
    \includegraphics[width=0.4\linewidth]{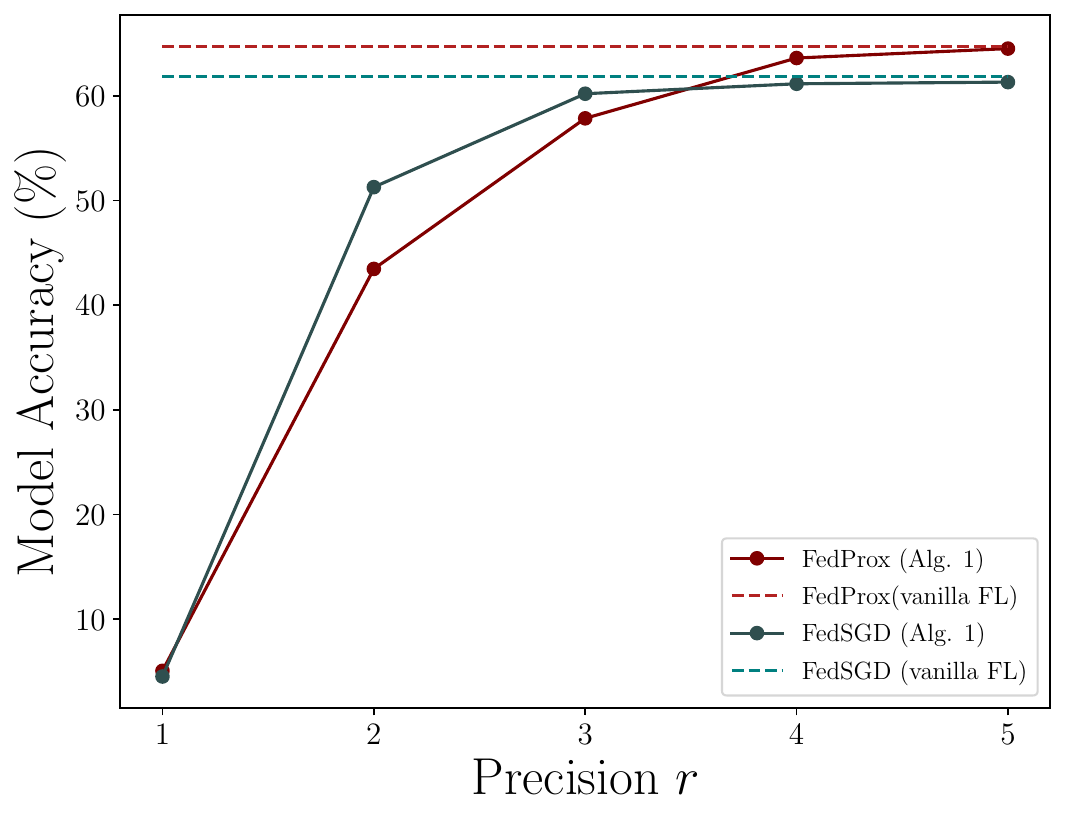}
    \caption{Model Accuracy (top-1) on CIFAR-10/CNN of Alg. \ref{alg:param_rns} with precision $r$ compared to vanilla FL ($10$ clients).}
    \label{app:exp_more_agg_func_ACC}
\end{figure}

\subsection{Communication Cost}
\Cref{fig:rns_shuffle_rounds} shows that RNS can encode even large numbers using a small set of moduli, minimizing the number of shuffling rounds. Moreover, \Cref{fig:comm_cost_zoomed} presents a zoomed-in view of \Cref{fig:comm_cost} for up to 20 clients. For $r=4$, we observe that \Cref{alg:param_rns} requires approximately 23 extra bits per parameter compared to vanilla FL, increasing to around 78 additional bits when $r=8$. The communication cost of SA is slightly lower only when $r=8$ and the number of clients is small (i.e. fewer than 5). 

Note that \Cref{fig:comm_cost_zoomed} shows that Alg. \ref{alg:param_rns} with compression can even surpass standard vanilla FL. However, recall that Alg. 1 transmits only the first $r$ digits of the parameter whereas vanilla FL transmits the whole value. Conversely, if one wants to use the standard binary compression of vanilla FL to transmit only the first $r$ digits, they might be able to use less bits. For example if $r=3$ than a 12-bit binary encoding is sufficient. In any case, we want to note that the proposed technique does not outperform vanilla FL in terms of communication cost, as it transmits less information.  

\Cref{tbl:comm_cost_more_clients} evaluates the proposed technique for more clients and more complex models (necessitating larger $r$). We observe that for both parameters Secure Aggregation scales poorly; in contrast the proposed technique offers reasonable communication cost, especially when combined with compression. Recall that \Cref{alg:param_rns} assumes a \emph{Semi-Honest} shuffler; if the shuffler is \emph{Fully Trusted} than coupling with RLE is possible.

\begin{table}[h]
\centering
\caption{Communication cost for larger parameters (bits per user per parameter)}
\begin{tabular}{cccccc}
\toprule
Clients & $r$ & SA & Alg. \ref{alg:param_rns} & \makecell{Alg. \ref{alg:param_rns} \\ + RLE.} & Vanilla FL \\
\midrule
1000  & 8  & 36926   & 160 & 43 & 32 \\
10000 & 8  & 399920  & 160 & 42 & 32 \\
1000  & 12 & 49900   & 281 & 61 & 32 \\
10000 & 12 & 539892  & 328 & 67 & 32 \\
1000  & 16 & 63872   & 381 & 73 & 32 \\
10000 & 16 & 669866  & 440 & 79 & 32 \\
\bottomrule
\end{tabular}
\label{tbl:comm_cost_more_clients}
\end{table}

Finally, \Cref{fig:expansion_factor} shows the expansion factor, that is the ratio of the encoded model size to the size of the initial vanilla FL model (i.e. using standard 32-bit binary encoding). The plot shows that for 10 clients, an expansion factor of $1.81 \times$ is sufficient to achieve nearly the same accuracy as vanilla FL, and $2.4 \times$ is needed to fully match it. If RLE compression is used, the expansion factor is $1.03 \times$, which is a negligible overhead.

\begin{figure}[h]
    \centering
    \begin{subfigure}[h]{4.5cm}
        \centering
        \includegraphics[width=\columnwidth]{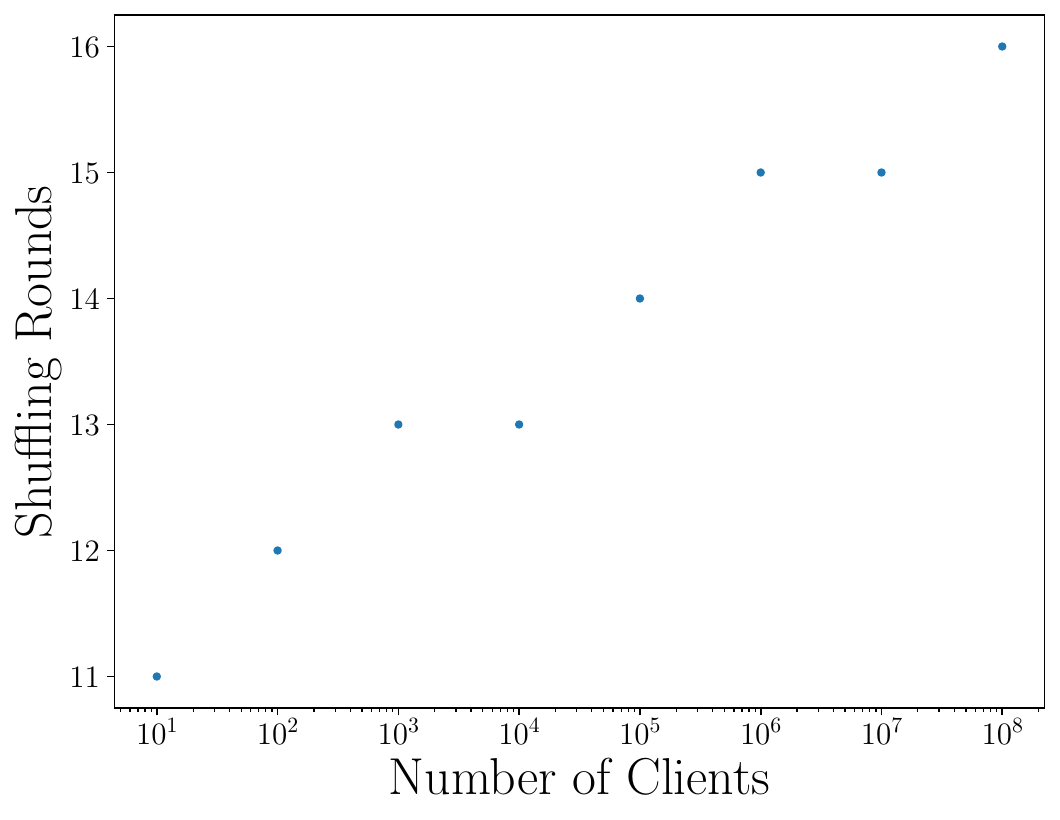} %
        \caption{$r=10$} 
    \end{subfigure}
    \begin{subfigure}[h]{4.5cm}
        \centering
        \includegraphics[width=\columnwidth]{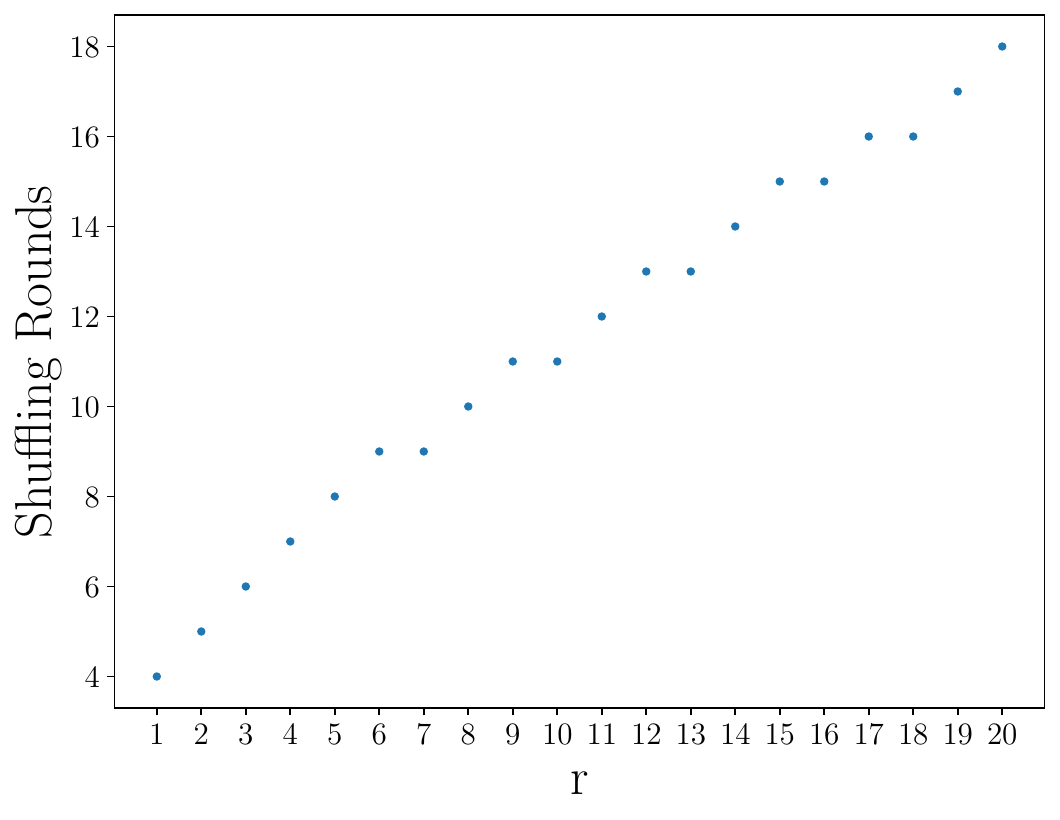} 
        \caption{$10$ clients}   
    \end{subfigure}
        \begin{subfigure}[h]{4.5cm}
        \centering
        \includegraphics[width=\columnwidth]{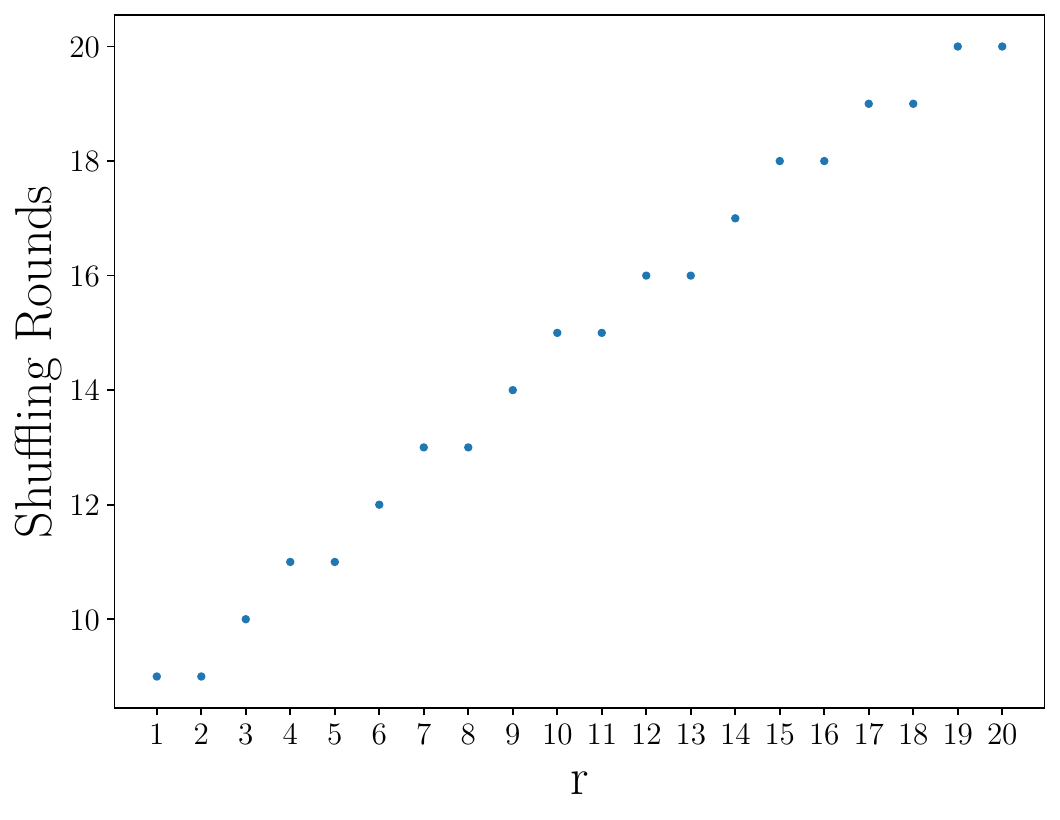} 
        \caption{$10^6$ clients}
    \end{subfigure}

    \caption{Number of shuffling rounds in \Cref{alg:param_rns}}
    \label{fig:rns_shuffle_rounds}
\end{figure}
\begin{figure}[h]
    \centering
    \begin{minipage}[t]{0.3\textwidth}
        \centering
        \includegraphics[width=\linewidth]{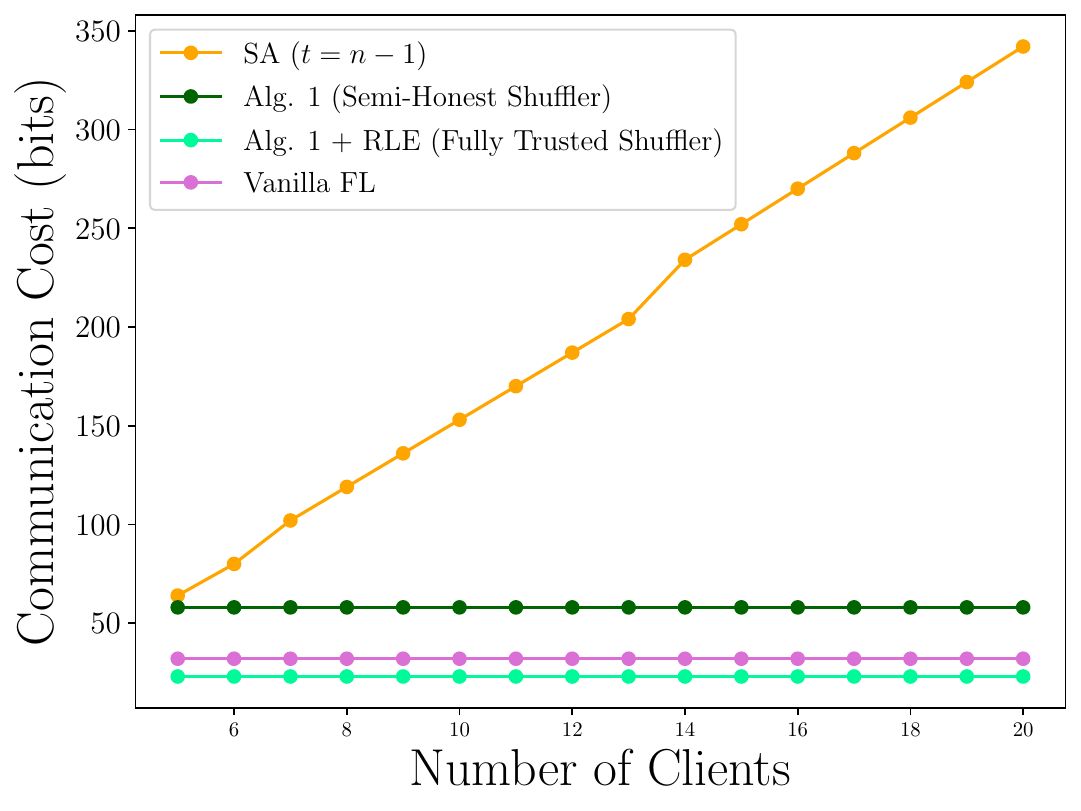}
        \caption{$r=4$}
        \label{fig:comm_cost_zoomed}
    \end{minipage}\hfill
    \begin{minipage}[t]{0.3\textwidth}
        \centering
        \includegraphics[width=\linewidth]{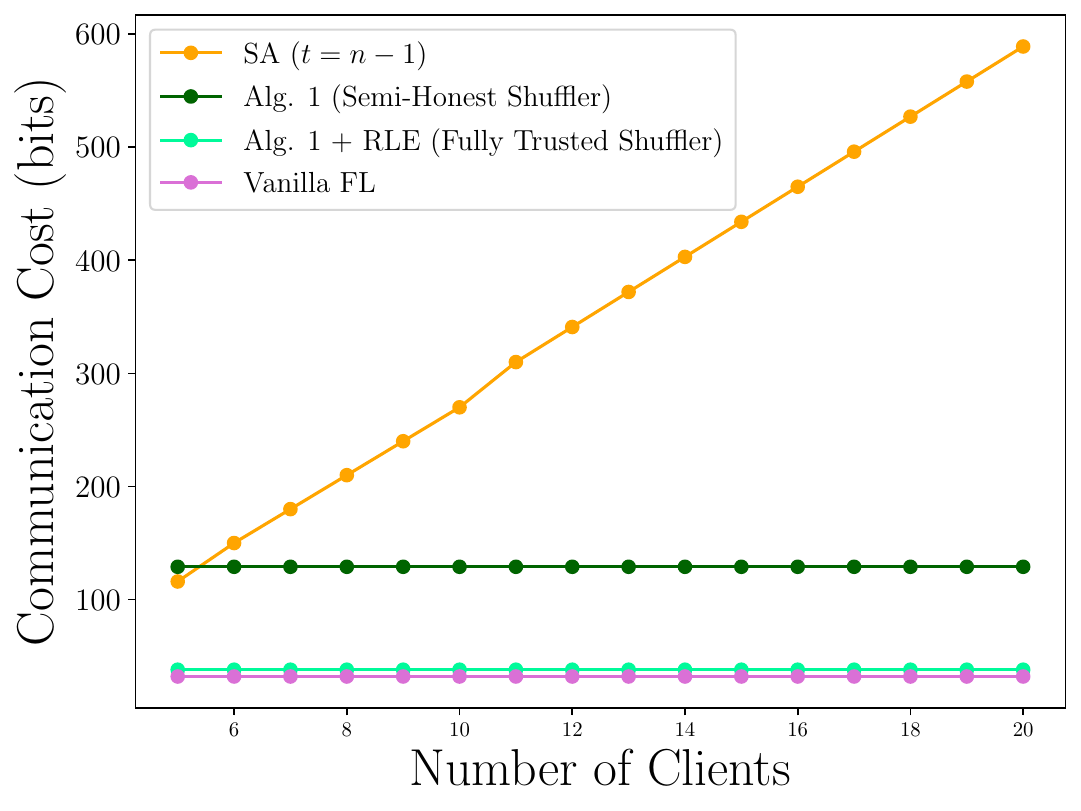}
        \caption{$r=8$}
        \label{fig:r8}
    \end{minipage}\hfill
    \begin{minipage}[t]{0.3\textwidth}
        \centering
        \includegraphics[width=\linewidth]{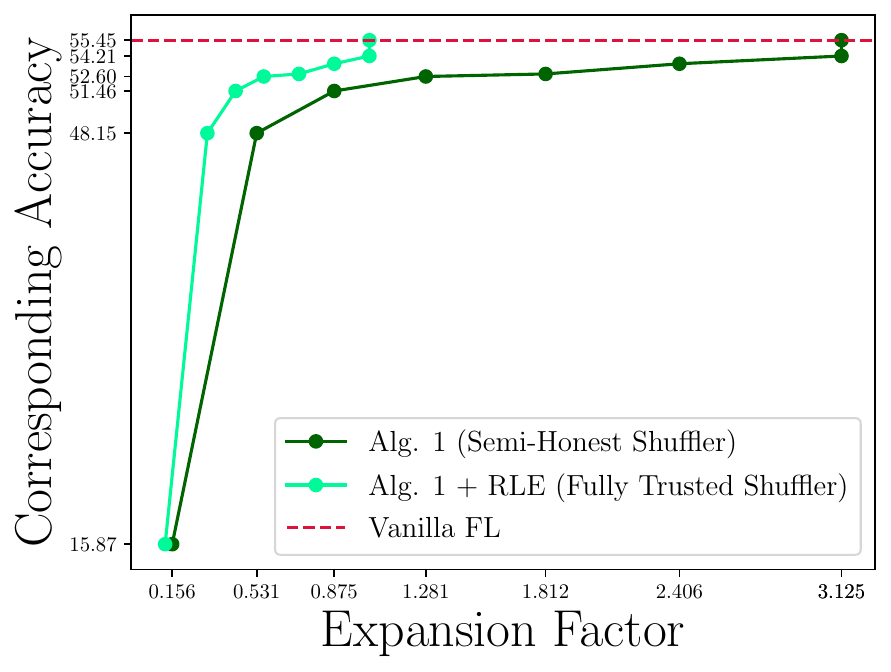}
        \caption{Expansion factor on CIFAR-100/ResNet when $n=10$}
        \label{fig:expansion_factor}
    \end{minipage}
    \label{fig:combined}
\end{figure}

\clearpage
\subsection{Computation Cost} \label{app:comp_cost}
The encoding overhead of \Cref{alg:param_rns}   is negligible for the clients since it relies on fast and primitive operations  (modulo and unary encoding). As shown in \Cref{table:comp_cost}, even ResNet can be encoded in just $19.1$ seconds, which is negligible compared to the total training time. Similarly, decoding (performed by the central server) is slightly slower but still remains efficient (e.g. slightly less than a minute for ResNet).

\begin{table} [h]
  \caption{Computation Time in seconds}
  \label{table:comp_cost}
  \centering
  \begin{tabular}{llll}
    \toprule
   \textbf{DB/Model  }& \textbf{ $\#$ of Parameters} & \textbf{Encoding} & \textbf{Decoding}  \\
    \midrule
     MNIST/CNN & 643850     & 1.16 & 3.8   \\
     CIFAR-10/CNN & 940362      & 1.6  & 5.6    \\
    CIFAR-100/ResNet & 11237432      & 19.1 & 57 \\
    \bottomrule
  \end{tabular}
\end{table}

\subsection{SIA Success Rate with subsampling} \label{app:subsampling}
\Cref{fig:sample} illustrates the accuracy of SIA when subsampling is applied. In the case of MNIST, the SIA accuracy drops significantly when the number of clients increases from 10 to 50 (from 45\% to 10.2\%), but rises again when client sampling is applied (44.1\% at a 20\% sampling rate). Similar trends are observed for CIFAR-10 and CIFAR-100, where SIA accuracy decreases with more clients (from 51\% to 16\% in CIFAR-10, and from 77\% to 26\% in CIFAR-100), but increases when only a subset of clients is sampled (up to 50.33\% and 72.5\%, respectively). This experiment highlights that SIAs can be generalized to settings with more clients, if subsampling is applied.

\begin{figure}[h!]
        \centering
        \includegraphics[width= 0.6 \columnwidth]{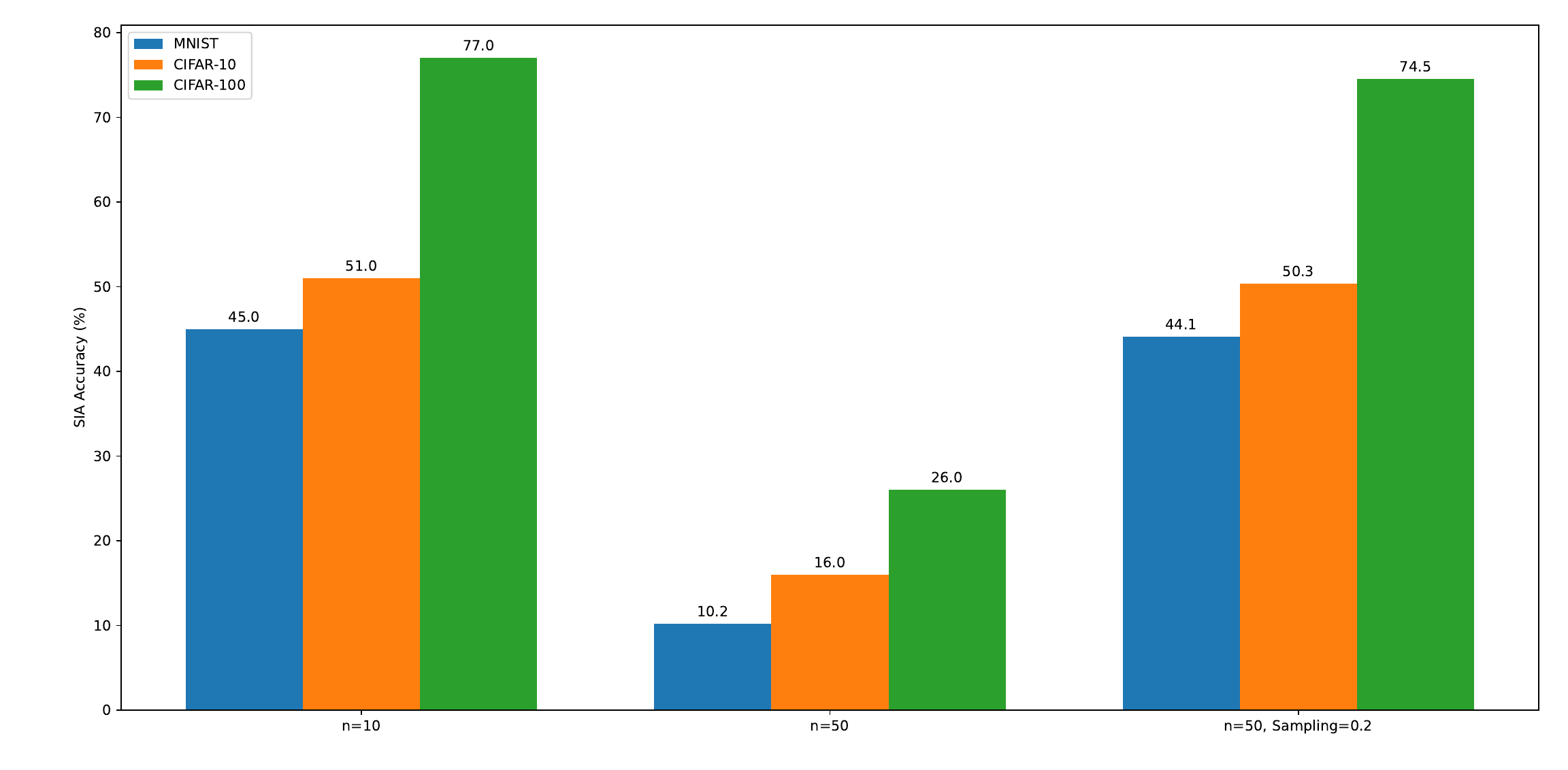} 
    \caption{SIA accuracy on MNIST (blue), CIFAR-10 (orange) and CIFAR-100 (green) under client sampling with $\alpha = 0.1$ and 10 local epochs.}
    \label{fig:sample}
\end{figure}
\clearpage

\subsection{Synergy with DP} \label{app:synergyDP}

\paragraph{Privacy of DP-SGD}
A well-known property of DP is that post-processing does not affect the privacy guarantees of a mechanism:

\begin{lemma} [Post-Processing] \cite{dppaper1} \label{post_process} If  $M$  is $(\varepsilon, \delta)$ - approximate differentially private then for every function $A$, $A \circ M$ is $(\varepsilon, \delta)$ - approximate differentially private.
\end{lemma}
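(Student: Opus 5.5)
The plan is to prove \Cref{post_process} directly from the definition of $(\varepsilon,\delta)$-approximate differential privacy. I reduce the general case, where $A$ may be randomized, to the case of a deterministic $A$, and then average over the internal randomness of $A$. The definition I use: for all neighbouring datasets $D, D'$ and every measurable event $S$ in the output space of $M$, $P[M(D)\in S] \le e^{\varepsilon} P[M(D')\in S] + \delta$.

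\textbf{Step 1 (deterministic $A$).} Let $A$ be a deterministic measurable function, and fix neighbouring $D,D'$ and an event $T$ in the range of $A$. Take $S := A^{-1}(T)$, the preimage, which is measurable by assumption on $A$. Since $A(M(D))\in T$ holds exactly when $M(D)\in S$, we get
\[
P[A(M(D))\in T] = P[M(D)\in S] \le e^{\varepsilon}P[M(D')\in S]+\delta = e^{\varepsilon}P[A(M(D'))\in T]+\delta .
\]
This is exactly the $(\varepsilon,\delta)$ condition for $A\circ M$.

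\textbf{Step 2 (randomized $A$).} Write a randomized $A$ as $A(y)=f(y,R)$. Here $R$ is a random seed drawn from a fixed distribution, independent of $M$ and of the dataset, and each $f_r := f(\cdot,r)$ is deterministic and measurable. I condition on $R=r$ and apply Step 1 to $f_r$, which gives
\[
P[f_r(M(D))\in T]\le e^{\varepsilon}P[f_r(M(D'))\in T]+\delta .
\]
I then integrate both sides over the distribution of $R$. By independence of $R$, the left side becomes $P[A(M(D))\in T]$. By linearity of expectation, the right side becomes $e^{\varepsilon}P[A(M(D'))\in T]+\delta$; the additive $\delta$ is preserved because the distribution of $R$ integrates to $1$.

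\textbf{Main obstacle.} The only genuinely delicate point is the technical justification in Step 2. I need to represent $A$ as a mixture of deterministic maps with randomness independent of the input, and I need the conditioning and integration to be legitimate, i.e.\ Fubini applied to the joint law of $(M(D),R)$. I would handle this by assuming, as is standard, that $A$ is a Markov kernel. Then $P[A(M(D))\in T]=\mathbb{E}\big[K(M(D),T)\big]$ with $K(y,T)\in[0,1]$.

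Given that representation, the kernel version is immediate. The $(\varepsilon,\delta)$ inequality for indicator functions extends to all measurable functions $g$ with values in $[0,1]$, namely $\mathbb{E}[g(M(D))]\le e^{\varepsilon}\mathbb{E}[g(M(D'))]+\delta$. To see this, write $g=\int_0^1 \mathbf{1}[g>t]\,dt$ (the layer-cake formula) and integrate the inequality for each level set $\{g>t\}$ over $t\in[0,1]$. Applying this with $g=K(\cdot,T)$ yields the claim without needing an explicit seed $R$.
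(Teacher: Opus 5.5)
Your proof is correct. The paper does not prove this lemma itself; it only cites it from \cite{dppaper1}, and the standard argument there is essentially yours: the preimage trick for a deterministic map, then writing a randomized map as a convex combination of deterministic ones and averaging the inequalities, with the additive $\delta$ preserved because the weights sum to one. Your Markov-kernel/layer-cake variant, which extends the $(\varepsilon,\delta)$ inequality from indicators to $[0,1]$-valued test functions, is a clean measure-theoretic way to justify Step~2 without exhibiting an explicit independent seed.
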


Combining \Cref{alg:param_rns} with DP-SGD, involves encoding the parameters (using RNS) after they have been obfuscated with DP. Hence, encoding is post-processing (i.e. performed after the addition of noise) and thus the privacy analysis of DP-SGD is not affected.

However, \Cref{alg:param_rns} is based on the assumption that the parameters of the model can be bounded; in our work, we assume w.l.o.g. that every parameter $p$ is in (-1,1). DP-SGD \cite{dp_sgd} bounds (i.e. clips) the gradient but then adds unbounded Gaussian noise. While clipping does help, the noisy output might, theoretically, fall outside of (-1,1). 

To avoid this problem we can use a truncated Gaussian; it is known that such mechanism offers the same Rényi-DP (RDP) privacy guarantees as the standard non-truncated Gaussian \cite{fu2023truncatedlaplacegaussianmechanisms}. DP-SGD measures privacy by the Moments Accountants method, which is essentially a scaled variant of RDP \cite{dp_sgd}, thus using a truncated Gaussian will not affect its privacy analysis. Furthermore, the truncated Gaussian mechanism also satisfies the same $\varepsilon$ DP-guarantee as the unbounded version \cite{Chen2024BoundedGaussian}.

In case one wishes to use something other than the standard Gaussian mechanism for DP-SGD, one solution would be to bound the probability that the mechanism outputs an out-of-bounds value by the $\delta$ term of DP (e.g. by using Markov’s inequality, which is a common technique \cite{metric_shuffle}). Another approach, to avoid adding $\delta$, would be for the clients to collectively find the bounds after adding noise by calculating their minimum and maximum values (e.g. with an MPC protocol). In that case the communication cost does not depend on the number of dimensions (i.e. number of parameters) as each client will first compute locally their minimum and maximum over all of their dimensions.

In practice, anyway, even with the standard Gaussian, it is very rare to get a result out of range, since the amount of noise added is usually rather small.
The standard deviation of the noise $\sigma$ is ${C \cdot \mathrm{NoiseMultiplier}}/{\mathrm{BatchSize}}$, which is typically very small. For example, with a clipping norm $C=1$, noise multiple r =1.0, batch size = 64, the resulting standard deviation is only $\sigma \approx 0.015$. Furthermore, the (noisy) gradient is also multiplied by the learning rate, thus the DP noise is also downscaled. 

To empirically validate this behavior, we conducted experiments on the MNIST dataset under three different privacy budgets ($\varepsilon$) using 10 clients, 10 local epochs, 10 global epochs and $\alpha=0.1$. \Cref{tab:dp_ranges} shows that not a single parameter of any client was outside of the range (-1,1).

\begin{table}[h!]
\centering
\caption{Range of Parameters for different $\varepsilon$}
\label{tab:dp_ranges}
\begin{tabular}{lc}
\hline
\textbf{${\varepsilon}$} & {Range of Parameters} \\
\hline
1 & $[-0.438,\; 0.592]$ \\
2 & $[-0.412,\; 0.573]$ \\
5 & $[-0.400,\; 0.543]$ \\
\hline
\end{tabular}
\end{table}

\paragraph{SIA protection}
Regarding protection (i.e. SIA accuracy), coupling DP-SGD with our mechanism does not affect the defense, since \Cref{alg:param_rns} and DP are orthogonal to each other, in the sense that adding DP noise does not affect the analysis of \Cref{prop:perfect_protection}. We conducted an evaluation combining \Cref{alg:param_rns} with DP-SGD on MNIST (10 clients, 10 local epochs, 10 global epochs, $\alpha=0.1$), and the SIA accuracy was $10\%$ (i.e. equal to random guessing). 

\paragraph{Communication Cost}
We also tested if adding DP affects the selection of the parameter $r$ on the MNIST dataset (10 clients, 10 local epochs, 10 global epochs, $\alpha=0.1$). 
\Cref{tab:dp_results} shows that selecting $r=3$ achieves comparable accuracy and $r=5$ matches baseline accuracy (i.e. that of combining Vanilla FL with DP-SGD). In comparison, $r=2$ and $r=3$ are necessary for Vanilla FL without DP. 
This means that adding DP slightly increases the communication cost, as more precision is necessary to accurately depict the noisy parameters.

\begin{table}[h]
\centering
\caption{Model Accuracy with different $r$ on  MNIST}
\label{tab:dp_results}
\begin{tabular}{lcccc}
\hline
\textbf{Setting} & {No DP} & ${\varepsilon = 1}$ & ${\varepsilon = 2}$ & ${\varepsilon = 5}$ \\
\hline
Baseline & 98.2\% & 87\% & 88.46\% & 90.21\% \\
Alg. \ref{alg:param_rns} with $r = 1$  & 12\%   & 11\% & 11.48\% & 11.65\% \\
Alg. \ref{alg:param_rns} with $r = 2$  & 96.13\% & 70\% & 71.67\% & 75.94\% \\
Alg. \ref{alg:param_rns} with $r = 3$   & 98\%   & 83.51\% & 84.72\% & 87.56\% \\
Alg. \ref{alg:param_rns} with $r = 4$   & 98.07\% & 85.95\% & 87.32\% & 89.83\% \\
Alg. \ref{alg:param_rns} with $r = 5$  & 98.12\% & 86.44\% & 88.07\% & 90.12\% \\
\hline
\end{tabular}
\end{table}

In conclusion, the synergy offers the same privacy guarantees, albeit with a slightly higher communication cost to achieve baseline-level accuracy.

\subsection{Expansion to non-sum-based functions} \label{non_sum_based}

\Cref{alg:param_rns} reveals only the aggregated model, which works well with sum-based aggregation functions (e.g. FedAvg, FedSGD, FedProx). A natural question, however, arises: \emph{what if the central server wishes to compute something other than the sum or the average}? Consider, for example, FedMedian, where the server outputs the median of each parameter. To compute the median, the server needs access to all the parameters, not just their sum. Thus, the threat of SIAs emerges again. Since model obfuscation is not a feasible approach (cf. \Cref{app_recon_more_exps}), we must once more resort to our proposed method of \Cref{alg:param_rns} to provide protection.

Instead of shuffling the gradients of all $n$ clients (as in \Cref{alg:param_rns}), we can group them into small groups (clusters) of $k$ clients and shuffle the gradients within each group before sending them to the server. From the analysis of \Cref{alg:param_rns} we know that only the aggregated group model is disclosed. For instance, with $n=10$ and $k=2$, the server receives 5 shuffled groups. The server can then apply the FedMedian algorithm to these 5 aggregated groups (naturally, when $k = n$, this is the same as FedAvg). 
An SIA can at most determine which group a value belongs to, but not the specific client. The clients themselves can coordinate the formation of clusters by using a secure channel (e.g. MPC) to decide which peers join each group. The central server remains oblivious to the client composition of every cluster. Therefore, learning the cluster to which a client belongs does not reveal any sensitive information.

We conducted preliminary experiments on MNIST with $n=10$, $k=2$, $\alpha=0.1$, and $r=2$, selecting the clusters at random. The resulting model accuracy ($97.2\%$) was nearly identical to vanilla FedMedian ($\approx 98\%$). Moreover, under this approach the SIA accuracy remained at random-guessing levels (i.e. 10\%). Thus, mixing as few as two clients’ gradients might be sufficient to prevent SIAs, while still allowing the server to approximate the median using aggregated information from the small client groups.

We aim to further explore such approaches for other-sum-based functions as part of our future work.

\end{document}